\documentclass[conference,letterpaper]{IEEEtran}
\addtolength{\topmargin}{9mm}

\usepackage{cite}
\usepackage{url}
\usepackage{ifthen}
\usepackage{algorithm, algorithmicx, algpseudocode}
\usepackage{graphicx}
\usepackage{textcomp}
\usepackage{booktabs}
\usepackage[cmex10]{amsmath}
\usepackage{amssymb,amsfonts}
\usepackage{pifont}
\usepackage[left=1.62cm,right=1.62cm,top=1.85cm]{geometry}
\usepackage{amsthm}
\usepackage{mathrsfs}
\setlength{\columnsep}{0.2 in}
\def\BibTeX{{\rm B\kern-.05em{\sc i\kern-.025em b}\kern-.08em T\kern-.1667em\lower.7ex\hbox{E}\kern-.125emX}}
\usepackage{tikz}
\usetikzlibrary{automata,positioning,calc}
\usetikzlibrary{intersections}
\usetikzlibrary{decorations.pathreplacing,angles,quotes}

\usepackage{bm}
\usepackage{amscd}

\algnewcommand{\Initialize}[1]{%
  \State \textbf{Initialization:}
  \Statex \hspace*{\algorithmicindent}\parbox[t]{0.8\linewidth}{\raggedright #1}
}
\makeatletter

\theoremstyle{definition}
\newtheorem{theorem}{Theorem}
\newtheorem{definition}{Definition}
\newtheorem{lemma}{Lemma}
\newtheorem{prop}{Proposition}
\newtheorem{cor}{Corollary}
\newtheorem{remark}{Remark}


\interdisplaylinepenalty=2500 

\newcommand{\abs}[1]{\left\lvert#1\right\rvert}

\newcommand{\argmax}{\operatornamewithlimits{argmax}}

\def\bn{\mathbb N}

\def\br{\mathbb R}

\def\vE{\mathbb E}

\font\b=cmr10 scaled\magstep4

\def\bigzerou{\smash{\lower1.7ex\hbox{\b 0}}}
\def\bigzerou{\smash{\lower1.7ex\hbox{\b 0}}}
\hyphenation{op-tical net-works semi-conduc-tor}
\IEEEoverridecommandlockouts
\begin{document}

\title{New Algorithms for Computing Sibson Capacity and Arimoto Capacity
\thanks{This work was supported by JSPS KAKENHI Grant Number JP23K16886.}
}

\author{%
  \IEEEauthorblockN{Akira Kamatsuka\IEEEauthorrefmark{1},
                    Yuki Ishikawa\IEEEauthorrefmark{2},
                    Koki Kazama\IEEEauthorrefmark{1},
                    and Takahiro Yoshida\IEEEauthorrefmark{3}}
  \IEEEauthorblockA{\IEEEauthorrefmark{1}%
                   Shonan Institute of Technology, Japan, 
                    \{kamatsuka, kazama\}@info.shonan-it.ac.jp}
  \IEEEauthorblockA{\IEEEauthorrefmark{2}%
                    The University of Electro-Communications, Tokyo, Japan, 
                    i2431017@edu.cc.uec.ac.jp}
  \IEEEauthorblockA{\IEEEauthorrefmark{3}%
                    Nihon University, Japan, 
                    yoshida.takahiro@nihon-u.ac.jp}
}

\maketitle

\begin{abstract}
The Sibson and Arimoto capacity, which are  based on the Sibson and Arimoto mutual information (MI) of order $\alpha$, respectively, 
are well-known generalizations of the channel capacity $C$. 
In this study, we derive novel alternating optimization algorithms for computing these capacities by providing new variational characterizations of the Sibson  and Arimoto MI. 
Moreover, we prove that all iterative algorithms for computing these capacities are equivalent under appropriate conditions imposed on their initial distributions. 
\end{abstract}

\section{Introduction} \label{sec:intro}
Channel capacity $C:= \max_{p_{X}}I(X; Y)$, where $p_{X}$ is an input distribution and $I(X; Y)$ is the Shannon mutual information (MI), is a fundamental quantity in information theory because it characterizes the supremum of the achievable rate in a channel coding problem of a discrete memoryless channel $p_{Y\mid X}$ \cite{shannon}. 
In literature, there are several ways to generalize the channel capacity. 

A well-known generalization of the capacity is a class of \textit{capacity of order $\alpha$} (\textit{$\alpha$-capacity}) $C_{\alpha}^{(\cdot)}:=\max_{p_{X}}I_{\alpha}^{(\cdot)}(X; Y), \alpha\in (0, 1)\cup (1, \infty)$, 
where $I_{\alpha}^{(\cdot)}(X; Y)$ is a variant of MI (referred to as $\alpha$-MI \cite{7308959}) such as Sibson MI $I_{\alpha}^{\text{S}}(X; Y)$ \cite{Sibson1969InformationR}, 
Arimoto MI $I_{\alpha}^{\text{A}}(X; Y)$ \cite{arimoto1977}, 
and  Augustin--Csisz\'{a}r MI $I_{\alpha}^{\text{C}}(X; Y)$ \cite{370121},\cite{augusting_phd_thesis}. 
Recently, Liao \textit{et al.} reported the operational meaning of Arimoto and Sibson capacity in privacy-guaranteed data-publishing problems \cite{8804205}. 

The Arimoto--Blahut algorithm, which is a well-known alternating optimization algorithm, proposed and developed by 
Arimoto \cite{1054753}, Blahut \cite{1054855}, and other authors 
\cite{1405276}, \cite{5484972}, \cite{7035101},\cite{Toyota:2020aa},\cite{2641,9476038},\cite{nakagawa2022proof}
is used for calculating capacity $C$. 
To derive the algorithm, Arimoto and Blahut provided a variational characterization of Shannon MI, i.e., 
transformed its definition into an optimization problem.
Extending his results, Arimoto showed the equivalence between Arimoto capacity and Sibson capacity\footnote{Csisz\'{a}r showed that Augustin--Csisz\'{a}r capacity $C_{\alpha}^{\text{C}}$ is equal to Sibson capacity $C_{\alpha}^{\text{S}}$ \cite{370121} (see \cite{e22050526},\cite{Nakiboglu:2019aa}, and \cite{e23020199} for more general results).}, i.e., $C_{\alpha}^{\text{A}} = C_{\alpha}^{\text{S}}$,
and derived an alternating optimization algorithm for computing the Sibson capacity $C_{\alpha}^{\text{S}}$ by providing a variational characterization of Sibson MI \cite{arimoto1977,1055640}\footnote{Note that this is an interpretation of Arimoto's work from a current perspective because Sibson MI was not widely known at that time. Also note that computing Sibson capacity partly corresponds to computing the error exponent \cite{Gallager:1968:ITR:578869} and the correct decoding probability exponent \cite{1055007} because Sibson MI can be represented by the Gallager error exponent function $E_{0}(\rho, p_{X})$ \cite{Gallager:1968:ITR:578869}.}.
Later, Arimoto presented a similar iterative algorithm in his textbook (written in Japanese) \cite{BN01990060en} for directly calculating Arimoto capacity $C_{\alpha}^{\text{A}}$ 
by presenting a variational characterization of Arimoto MI without proof; however, the relation between these two iterative algorithms and the differences in their performance remain unclear. 

The main contribution of this paper are as follows:
\begin{itemize}
\item We describe the previously proposed algorithms \cite{arimoto1977,1055640,BN01990060en} in a unified manner in terms of \textit{$\alpha$-tilted distribution} \cite{8804205}. 
Using the $\alpha$-tilted distribution, we propose new algorithms for computing the Arimoto and Sibson capacities by presenting novel variational characterizations of Sibson and Arimoto MI (Theorem \ref{thm:novel_max_characterization} and \ref{thm:novel_update_formulae}). 
To derive the characterizations, we utilize the H{\"o}lrder's inequality.
\item We prove that all iterative algorithms are equivalent under appropriate conditions imposed on initial distributions (Theorem \ref{thm:equivalence_S1_S2_A1_A2} and Corollary \ref{cor:equivalence}).  
In Section \ref{sec:equivalence}, we provide a numerical example to demonstrate this equivalence. 
We also prove that all algorithms exhibit the global convergence property (Corollary \ref{cor:global_convergence}). 
\end{itemize}

\section{Preliminaries} \label{sec:preliminary}
Let $X$ and $Y$ be random variables on finite alphabets $\mathcal{X}$ and $\mathcal{Y}$, respectively.
Let $p_{X, Y} = p_{X}p_{Y\mid X}$ and $p_{Y}$ be a given joint distribution of $(X, Y)$ and a marginal distribution of $Y$, respectively. 
The set of all distributions $p_{X}$ is denoted as $\Delta_{\mathcal{X}}$. 
Let $H(X):=-\sum_{x}p_{X}(x)\log p_{X}(x)$, $H(X | Y):=-\sum_{x,y}p_{X}(x)p_{Y\mid X}(y | x)\log p_{X\mid Y}(x | y)$, 
and $I(X; Y) := H(X) - H(X | Y)$ be the Shannon entropy, conditional entropy, and Shannon MI, respectively.  
For a function of $X$, i.e., $f(X)$, we use $\vE_{X}[f(X)]$ to represent the expectation of $f(X)$. 
We also use $\vE_{X}^{p_{X}}[f(X)]$ to emphasize that we consider expectations in $p_{X}$. 
Throughout this study, we use $\log$ to represent the natural logarithm.

We initially review $\alpha$-MI, $\alpha$-capacity, and the Arimoto--Blahut algorithm.

\subsection{$\alpha$-mutual information and $\alpha$-capacity} \label{ssec:alpha_MI}

\begin{definition}
Let $\alpha\in (0, 1)\cup (1, \infty)$. Given distributions $p_{X}$ and $q_{X}$, the R{\' e}nyi entropy of order $\alpha$, 
denoted as $H_{\alpha}(p_{X}) = H_{\alpha}(X)$, and the R{\' e}nyi divergence between $p_{X}$ and $q_{X}$ of order $\alpha$, denoted as $D_{\alpha}(p_{X} || q_{X})$, 
are defined as follows:
\begin{align}
H_{\alpha}(X) &:= \frac{1}{1-\alpha} \log \sum_{x} p_{X}(x)^{\alpha}, \\
D_{\alpha}(p_{X} || q_{X}) &:= \frac{1}{\alpha-1}\log \sum_{x} p_{X}(x)^{\alpha}q_{X}(x)^{1-\alpha}.
\end{align}
\end{definition}

\begin{definition}
Let $\alpha\in (0, 1)\cup (1, \infty)$ and $(X, Y)\sim p_{X, Y}=p_{X}p_{Y\mid X}$. 
The Sibson MI of order $\alpha$, denoted as $I_{\alpha}^{\text{S}} (X; Y)$, and the Arimoto MI of order $\alpha$, denoted as $I_{\alpha}^{\text{A}} (X; Y)$, 
are defined as follows:
\begin{align}
I_{\alpha}^{\text{S}} (X; Y) &:= \min_{q_{Y}} D_{\alpha} (p_{X}p_{Y\mid X} || p_{X}q_{Y}) \\ 
&= \frac{\alpha}{\alpha-1}\log \sum_{y}\left( \sum_{x} p_{X}(x)p_{Y\mid X}(y\mid x)^{\alpha} \right)^{\frac{1}{\alpha}} \\ 
&=  \frac{\alpha}{1-\alpha} E_{0} \left( \frac{1}{\alpha}-1, p_{X} \right), \\ 
I_{\alpha}^{\text{A}}(X; Y) &:= H_{\alpha}(X) - H_{\alpha}^{\text{A}}(X\mid Y) \\
&= \frac{\alpha}{1-\alpha} E_{0}\left( \frac{1}{\alpha}-1, p_{X_{\alpha}} \right), 
\end{align}
where $E_{0}(\rho, p_{X}):= -\log \sum_{y}\left( \sum_{x}p_{X}(x)p_{Y\mid X}(y\mid x)^{\frac{1}{1+\rho}} \right)^{1+\rho}$ is the Gallager error exponent function 
\cite{Gallager:1968:ITR:578869}, 
$H_{\alpha}^{\text{A}}(X | Y):= \frac{\alpha}{1-\alpha}\log\sum_{y} \left( \sum_{x}p_{X}(x)^{\alpha}p_{Y\mid X}(y\mid x)^{\alpha} \right)^{\frac{1}{\alpha}}$ 
is the Arimoto conditional entropy of order $\alpha$ \cite{arimoto1977}, 
and $p_{X_{\alpha}}$ is the $\alpha$-tilted distribution \cite{8804205} (scaled distribution \cite{7308959}, escort distribution \cite{10.5555/3019383}) of $p_{X}$, defined as follows: 
\begin{align}
p_{X_{\alpha}}(x) &:= \frac{p_{X}(x)^{\alpha}}{\sum_{x}p_{X}(x)^{\alpha}}. \label{eq:alpha_tilted_dist}
\end{align} 
\end{definition}

\begin{remark}
Note that the values of $I_{\alpha}^{\text{S}} (X; Y)$ and $I_{\alpha}^{\text{A}} (X; Y)$ are extended by continuity to $\alpha=1$ and $\alpha=\infty$. 
For $\alpha=1$, $I_{\alpha}^{\text{S}} (X; Y)$ and $I_{\alpha}^{\text{A}} (X; Y)$ reduce to the Shannon MI, $I(X; Y)$.
\end{remark}

The $\alpha$-tilted distribution has the following properties: 
\begin{prop} \label{prop:properies_alpha_dist}
Let $\alpha, \beta\in (0, 1)\cup (1, \infty)$. Given a distribution $p_{X}$, 
\begin{enumerate}
\item the $\beta$-tilted distribution of the $\alpha$-tilted distribution of $p_{X}$ is the $\alpha\beta$-tilted distribution of $p_{X}$, i.e., 
$p_{(X_{\alpha})_{\beta}} = p_{X_{\alpha\beta}}$. Specifically, $p_{\left( X_{\alpha} \right)_{{1}/{\alpha}}} = p_{(X_{{1}/{\alpha}})_{\alpha}} = p_{X}$. 
\item Given a continuous function $\mathcal{F}\colon \Delta_{\mathcal{X}} \to \br$, 
\begin{align}
\max_{p_{X}} \mathcal{F}(p_{X}) = \max_{p_{X}} \mathcal{F}(p_{X_{\alpha}}), \label{eq:alpha_tilted_max}
\end{align}
where the maximum is taken over all distributions $p_{X}$.
\item Assume that $p_{X}$ has full support. Then, $p_{X} = p_{X_{\alpha}}$ if and only if $p_{X}$ is the uniform distribution on $\mathcal{X}$, i.e., $p_{X}(x) = 1/\abs{\mathcal{X}}, x\in \mathcal{X}$. 
\end{enumerate}
\end{prop}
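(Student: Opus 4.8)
The plan is to handle the three claims in order, since the second and third both follow quickly once the first (the composition law $p_{(X_\alpha)_\beta} = p_{X_{\alpha\beta}}$) is in place.

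For part 1, I would substitute the definition \eqref{eq:alpha_tilted_dist} into itself. Setting $Z_\alpha := \sum_{x} p_X(x)^\alpha > 0$, we have $p_{X_\alpha}(x) = p_X(x)^\alpha / Z_\alpha$, so $p_{X_\alpha}(x)^\beta = p_X(x)^{\alpha\beta} / Z_\alpha^\beta$; the factor $Z_\alpha^{\beta}$ is independent of $x$ and hence cancels upon normalization, leaving $p_{(X_\alpha)_\beta}(x) = p_X(x)^{\alpha\beta} / \sum_{x'} p_X(x')^{\alpha\beta} = p_{X_{\alpha\beta}}(x)$. The special case $p_{(X_\alpha)_{1/\alpha}} = p_{(X_{1/\alpha})_\alpha} = p_{X}$ then follows because the $1$-tilted distribution is $p_X$ itself. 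This step is purely mechanical; I expect no difficulty.

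For part 2, the key observation is that, by part 1, the map $T_\alpha\colon \Delta_{\mathcal{X}}\to \Delta_{\mathcal{X}}$ defined by $T_\alpha(p_X) := p_{X_\alpha}$ is a continuous bijection whose inverse is $T_{1/\alpha}$, because $T_{1/\alpha}\circ T_\alpha = T_1 = \mathrm{id} = T_\alpha \circ T_{1/\alpha}$. Continuity holds since the denominator $\sum_x p_X(x)^\alpha$ never vanishes on $\Delta_{\mathcal{X}}$; and since $0^\alpha = 0$ while $t^\alpha > 0$ for $t > 0$, tilting preserves the support of $p_X$, so $T_\alpha$ remains onto even on the boundary of the simplex, every $q_X$ being $T_\alpha(q_{X_{1/\alpha}})$. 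Consequently $\{p_{X_\alpha} : p_X\in\Delta_{\mathcal{X}}\} = \Delta_{\mathcal{X}}$, and \eqref{eq:alpha_tilted_max} follows by a change of optimization variable; compactness of $\Delta_{\mathcal{X}}$ and continuity of $\mathcal{F}$ ensure the maxima are attained. The only point demanding slight care is the behaviour on the boundary of $\Delta_{\mathcal{X}}$, which the support-preservation remark settles; this is the nearest thing to an obstacle.

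For part 3, the \emph{if} direction is immediate: for the uniform $p_X$ the numerator $p_X(x)^\alpha$ in \eqref{eq:alpha_tilted_dist} does not depend on $x$, so normalizing returns $1/\abs{\mathcal{X}}$, i.e., $p_{X_\alpha} = p_X$. For the \emph{only if} direction, assume $p_X$ has full support and $p_X = p_{X_\alpha}$. Then $p_X(x)\sum_{x'} p_X(x')^\alpha = p_X(x)^\alpha$ for all $x$; dividing through by $p_X(x) > 0$ gives $p_X(x)^{\alpha-1} = \sum_{x'} p_X(x')^\alpha$, whose right-hand side is independent of $x$. Since $\alpha \neq 1$, taking the $(\alpha-1)$-th root forces $p_X(x)$ to be constant in $x$, hence $p_X$ uniform. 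No real difficulty arises here either.
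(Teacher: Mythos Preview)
Your proof is correct and follows essentially the same approach as the paper: part~1 by direct substitution, part~2 via the invertibility of the tilting map established in part~1, and part~3 by the same algebraic manipulation. The only stylistic difference is in part~2, where you phrase the argument as ``$T_\alpha$ is a bijection on $\Delta_{\mathcal{X}}$, so the optimization domains coincide,'' whereas the paper writes the equivalent chain $\max_{p_X}\mathcal{F}(p_{X_\alpha}) \le \max_{p_X}\mathcal{F}(p_X) = \mathcal{F}(p_X^*) = \mathcal{F}(p^*_{(X_{1/\alpha})_\alpha}) \le \max_{p_X}\mathcal{F}(p_{X_\alpha})$; both rely on the identity $p_{(X_{1/\alpha})_\alpha}=p_X$ from part~1.
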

\begin{proof}
See Appendix \ref{proof:properies_alpha_dist}. 
\end{proof}

Arimoto showed that using \eqref{eq:alpha_tilted_max} \cite{arimoto1977}, the Sibson and Arimoto capacities are equivalent.

\begin{definition}
Given a channel $p_{Y\mid X}$, Sibson capacity $C_{\alpha}^{\text{S}}$ and Arimoto capacity $C_{\alpha}^{\text{A}}$ are defined as follows, respectively:
\begin{align}
C_{\alpha}^{\text{S}} &= \max_{p_{X}} I_{\alpha}^{\text{S}}(X; Y), \\ 
C_{\alpha}^{\text{A}} &= \max_{p_{X}} I_{\alpha}^{\text{A}}(X; Y).
\end{align}
\end{definition}

\begin{prop}[\text{\cite[Lemma 1]{arimoto1977}}]
\begin{align}
C_{\alpha}^{\text{S}} &= C_{\alpha}^{\text{A}}.
\end{align}
\end{prop}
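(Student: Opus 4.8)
The plan is to reduce both capacities to the maximization of a single functional by using the Gallager-$E_0$ representations of the two $\alpha$-mutual informations already recorded above, and then to absorb the extra $\alpha$-tilt by Proposition~\ref{prop:properies_alpha_dist}(2). Concretely, I would introduce the map $\mathcal{F}\colon \Delta_{\mathcal{X}}\to\br$ defined by $\mathcal{F}(p_{X}) := \frac{\alpha}{1-\alpha}E_{0}\!\left(\frac{1}{\alpha}-1,\,p_{X}\right)$. The identities in the Definition then read $\mathcal{F}(p_{X}) = I_{\alpha}^{\text{S}}(X;Y)$ and $\mathcal{F}(p_{X_{\alpha}}) = I_{\alpha}^{\text{A}}(X;Y)$, valid for every input distribution $p_{X}$ (note $p_{X_{\alpha}}$ is well defined since $\sum_{x}p_{X}(x)^{\alpha}>0$).

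Next I would check that $\mathcal{F}$ is continuous on the compact simplex $\Delta_{\mathcal{X}}$, so that Proposition~\ref{prop:properies_alpha_dist}(2) applies. This is routine: with $\rho=\frac1\alpha-1>-1$, each inner sum $\sum_{x}p_{X}(x)\,p_{Y\mid X}(y\mid x)^{1/\alpha}$ is linear in $p_{X}$ hence continuous; the outer power $(\cdot)^{1+\rho}=(\cdot)^{1/\alpha}$ is continuous on $[0,\infty)$ because $1/\alpha>0$; and $-\log(\cdot)$ is continuous on $(0,\infty)$, the total sum over $y$ being strictly positive for every $p_{X}$ (any positive coordinate $p_{X}(x_{0})$ forces a positive term for some reachable output letter, while unreachable letters merely contribute $0$ and cause no singularity). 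Applying \eqref{eq:alpha_tilted_max} to $\mathcal{F}$ then yields
\begin{align*}
C_{\alpha}^{\text{A}}
&= \max_{p_{X}} I_{\alpha}^{\text{A}}(X;Y)
 = \max_{p_{X}} \mathcal{F}(p_{X_{\alpha}}) \\
&= \max_{p_{X}} \mathcal{F}(p_{X})
 = \max_{p_{X}} I_{\alpha}^{\text{S}}(X;Y)
 = C_{\alpha}^{\text{S}}.
\end{align*}

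The only substantive point in the argument is the identity $\max_{p_{X}}\mathcal{F}(p_{X_{\alpha}})=\max_{p_{X}}\mathcal{F}(p_{X})$, which rests on the $\alpha$-tilting map being a (continuous) surjection of $\Delta_{\mathcal{X}}$ onto itself — precisely what Proposition~\ref{prop:properies_alpha_dist}(1)--(2) provides, its inverse being the $1/\alpha$-tilting. Since that proposition is already established, the present statement follows by a one-line substitution; in particular the sign of $\frac{\alpha}{1-\alpha}$, and hence the distinction between $\alpha\in(0,1)$ and $\alpha>1$, is irrelevant because the proposition is invoked for $\mathcal{F}=I_{\alpha}^{\text{S}}$ directly rather than for $E_{0}$. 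I do not anticipate a real obstacle here beyond being careful about continuity of $\mathcal{F}$ at the boundary of the simplex.
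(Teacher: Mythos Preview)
Your proposal is correct and follows exactly the route the paper indicates: it attributes the equality to Arimoto and states that it is obtained ``using \eqref{eq:alpha_tilted_max}'', i.e., by applying Proposition~\ref{prop:properies_alpha_dist}(2) to $\mathcal{F}(p_X)=I_\alpha^{\text{S}}(X;Y)$ together with the $E_0$-identities $I_\alpha^{\text{S}}(X;Y)=\mathcal{F}(p_X)$ and $I_\alpha^{\text{A}}(X;Y)=\mathcal{F}(p_{X_\alpha})$. Your additional care in verifying continuity of $\mathcal{F}$ on the full simplex (so that Proposition~\ref{prop:properies_alpha_dist}(2) genuinely applies) is a detail the paper leaves implicit.
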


\begin{remark}
Csisz\'{a}r proposed another $\alpha$-MI (referred to as Augustin--Csisz\'{a}r MI of order $\alpha$), which is defined as $I_{\alpha}^{\text{C}}(X; Y):=\min_{q_{Y}}\vE_{X}\left[D_{\alpha}(p_{Y\mid X}(\cdot \mid X) || q_{Y})\right]$, and proved that Augustin--Csisz\'{a}r capacity $C_{\alpha}^{\text{C}}:=\max_{p_{X}}I_{\alpha}^{\text{C}}(X; Y)$ is equivalent to Sibson capacity $C_{\alpha}^{\text{S}}$ \cite[Prop. 1]{370121}.
Note that for $\alpha=1$, all these $\alpha$-capacities reduce to channel capacity $C:=\max_{p_{X}}I(X; Y)$.
\end{remark}

\subsection{Arimoto--Blahut algorithm}\label{ssec:ABA}
In this subsection, we review the well-known iterative algorithms proposed by Arimoto \cite{1054753, arimoto1977, 1055640, BN01990060en} and Blahut \cite{1054855} for computing channel capacity $C$ and $\alpha$-capacity $C_{\alpha}^{(\cdot)}$.
The key technique used to derive these iterative algorithms is the variational characterizations of MI on $q_{X\mid Y}=\{q_{X\mid Y}(\cdot\mid y)\}_{y\in \mathcal{Y}}$, where $q_{X\mid Y}(\cdot \mid y)$ is a conditional distribution of $X$, given $Y=y$. 

Arimoto \cite{1054753} and Blahut \cite{1054855} proved the following variational characterization of Shannon MI $I(X; Y)$. 

\begin{prop}[\text{\cite[Eqs.(10) and (11)]{1054753}}]\label{prop:max_characterization_Shannon_MI}
    \begin{align}
        I(X;Y) = \max_{q_{X\mid Y}} F(p_{X}, q_{X\mid Y}), \label{eq:max_characterization_Shannon_MI}
    \end{align}
    where $F(p_{X}, q_{X\mid Y}):= \vE_{X, Y}^{p_{X}p_{Y\mid X}}\left[\log \frac{q_{X\mid Y}(X\mid Y)}{p_{X}(X)}\right]$;  
    the maximum in \eqref{eq:max_characterization_Shannon_MI} is achieved at $q^{*}_{X\mid Y}(x | y) = p_{X\mid Y}(x | y):=\frac{ p_{X}(x)p_{Y\mid X}(y\mid x)}{\sum_{y}p_{X}(x)p_{Y\mid X}(y\mid x)}$. 
\end{prop}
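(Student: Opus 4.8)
The plan is to identify the gap $I(X;Y) - F(p_{X}, q_{X\mid Y})$ with a nonnegative quantity that vanishes precisely at $q_{X\mid Y} = p_{X\mid Y}$. First I would rewrite the Shannon MI in its posterior form $I(X;Y) = \vE_{X,Y}^{p_{X}p_{Y\mid X}}\bigl[\log\frac{p_{X\mid Y}(X\mid Y)}{p_{X}(X)}\bigr]$, which follows from $I(X;Y) = H(X) - H(X\mid Y)$ together with Bayes' rule $p_{X}(x)p_{Y\mid X}(y\mid x) = p_{Y}(y)p_{X\mid Y}(x\mid y)$. Subtracting the definition of $F$ then yields
\begin{equation*}
I(X;Y) - F(p_{X}, q_{X\mid Y}) = \sum_{x,y} p_{X}(x)p_{Y\mid X}(y\mid x)\log\frac{p_{X\mid Y}(x\mid y)}{q_{X\mid Y}(x\mid y)}.
\end{equation*}

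Next I would apply Bayes' rule once more inside this sum to replace $p_{X}(x)p_{Y\mid X}(y\mid x)$ by $p_{Y}(y)p_{X\mid Y}(x\mid y)$ and collect the terms over $x$ for each fixed $y$, exhibiting the right-hand side as $\sum_{y} p_{Y}(y)\, D\bigl(p_{X\mid Y}(\cdot\mid y)\,\big\|\,q_{X\mid Y}(\cdot\mid y)\bigr)$, a $p_{Y}$-average of Kullback--Leibler divergences. By nonnegativity of relative entropy (equivalently, Jensen's inequality for $-\log$, or the log-sum inequality), each summand is $\ge 0$, so $F(p_{X}, q_{X\mid Y}) \le I(X;Y)$ for every $q_{X\mid Y}$. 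Equality forces $D\bigl(p_{X\mid Y}(\cdot\mid y)\,\big\|\,q_{X\mid Y}(\cdot\mid y)\bigr) = 0$, hence $q_{X\mid Y}(\cdot\mid y) = p_{X\mid Y}(\cdot\mid y)$, for every $y$ with $p_{Y}(y) > 0$; and substituting $q_{X\mid Y} = p_{X\mid Y}$ back into $F$ returns exactly $I(X;Y)$, so the maximum is attained and equals $I(X;Y)$.

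Since the argument rests only on the classical \emph{relative-entropy gap} identity, I do not anticipate a genuine obstacle; the steps that need care are routine bookkeeping. One restricts all sums to pairs $(x,y)$ with $p_{X}(x)p_{Y\mid X}(y\mid x) > 0$, adopts the conventions $0\log 0 = 0$ and $0\log\frac{0}{0} = 0$ so that the conditional distributions and logarithms are well defined, and notes that on $\{y : p_{Y}(y) = 0\}$ the value of $F$ does not depend on $q_{X\mid Y}(\cdot\mid y)$, so the optimizer is unique only up to this immaterial freedom. The one point worth presenting carefully is the two Bayes-rule rewritings, so that the relative-entropy structure of the gap becomes manifest.
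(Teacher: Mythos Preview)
Your proof is correct and is exactly the classical relative-entropy-gap argument. The paper itself does not supply a proof of this proposition; it is simply cited as a known result of Arimoto and Blahut. For the $\alpha$-MI analogues the paper instead invokes H\"{o}lder's inequality, but that device degenerates at $\alpha=1$, where the Kullback--Leibler identity you exhibited is the natural replacement.
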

%
Consequently, the channel capacity is represented as double maximum $C = \max_{p_{X}}\max_{q_{X\mid Y}} F(p_{X}, q_{X\mid Y})$ and 
the alternating optimization algorithm for computing $C$ can be derived as described in Algorithm \ref{alg:aba}, where $p_{X}^{(0)}$ is an initial distribution of the algorithm.
Figure \ref{fig:fig_aba} shows the iterations of $p_{X}^{(k)}$ and $q_{X\mid Y}^{(k)}$ in the algorithm.

\begin{algorithm}[h]
	\caption{Arimoto--Blahut Algorithm \cite{1054753},\cite{1054855}}
	\label{alg:aba}
	\begin{algorithmic}[1]
		\Require 
			\Statex $p_{X}^{(0)}, p_{Y\mid X}$, $\epsilon\in (0, 1)$
		\Ensure
			\Statex $C$
		\Initialize{
			$q_{X\mid Y}^{(0)} \gets \argmax_{q_{X\mid Y}}F(p_{X}^{(0)}, q_{X\mid Y})$\\ 
			$F^{(0, 0)}\gets F(p_{X}^{(0)}, q_{X\mid Y}^{(0)})$ \\ 
			$k\gets 0$ \\
      }
		\Repeat
			\State $k\gets k+1$
			\State $p_{X}^{(k)} \gets \argmax_{p_{X}}F(p_{X}, q_{X\mid Y}^{(k-1)})$

			\State $q_{X\mid Y}^{(k)} \gets \argmax_{q_{X\mid Y}}F(p_{X}^{(k)}, q_{X\mid Y})$
			\State $F^{(k, k)} \gets F(p_{X}^{(k)}, q_{X\mid Y}^{(k)})$
		\Until{$\abs{F^{(k, k)} - F^{(k-1, k-1)}} < \epsilon$} 
		\State \textbf{return} $F^{(k, k)}$
	\end{algorithmic}
\end{algorithm}

\begin{figure}[htbp]
\centering
\includegraphics[width=5cm, clip]{./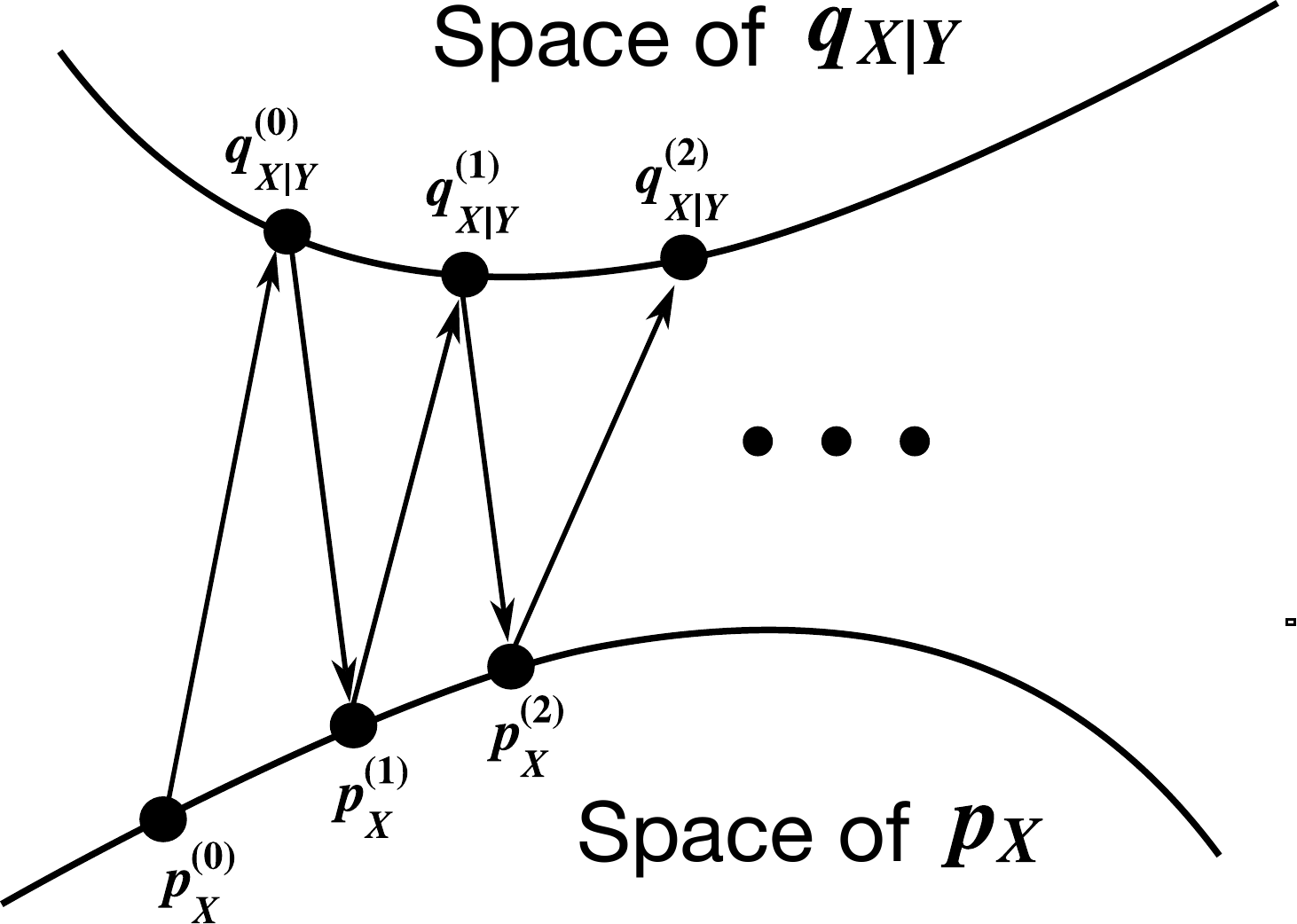}
\caption{Iterations of $p_{X}^{(k)}$ and $q_{X\mid Y}^{(k)}$ in the Arimoto--Blahut algorithm. }
\label{fig:fig_aba}
\end{figure}

Similarly, Arimoto \cite{arimoto1977, 1055640, BN01990060en} proposed the following variational characterizations 
of Sibson MI $I_{\alpha}^{\text{S}}(X; Y)$ and Arimoto MI $I_{\alpha}^{\text{A}}(X; Y)$.

\begin{prop}[\text{\cite[Thm 4]{arimoto1977},\cite[Thm 1]{1055640},\cite{BN01990060en}}] \label{prop:max_characterization_Sibson_Arimoto_MI}
\begin{align}
I_{\alpha}^{\text{S}}(X; Y) &= \max_{q_{X\mid Y}} F_{\alpha}^{\text{S}}(p_{X}, q_{X\mid Y}), \label{eq:max_characterization_Sibson_MI} \\ 
I_{\alpha}^{\text{A}}(X; Y) &= \max_{q_{X\mid Y}} F_{\alpha}^{\text{A}}(p_{X}, q_{X\mid Y}), \label{eq:max_characterization_Arimoto_MI}
\end{align}
where 
$F_{\alpha}^{\text{S}}(p_{X}, q_{X\mid Y}) := \frac{\alpha}{\alpha-1}\log \sum_{x, y}p_{X}(x)^{\frac{1}{\alpha}} p_{X\mid Y}(y | x) $\\ 
$q_{X\mid Y}(x | y)^{\frac{\alpha-1}{\alpha}}$, 
$F_{\alpha}^{\text{A}}(p_{X}, q_{X\mid Y}) := H_{\alpha}(p_{X}) - \frac{\alpha}{1-\alpha} \log \sum_{x, y}p_{X}(x)p_{Y\mid X}(y | x)q_{X\mid Y}(x | y)^{\frac{\alpha-1}{\alpha}}$; 
the maximum in \eqref{eq:max_characterization_Sibson_MI} is achieved at $q_{X\mid Y}^{*}(x | y):=\frac{p_{X}(x)p_{Y\mid X}(y\mid x)^{\alpha}}{\sum_{x}p_{X}(x)p_{Y\mid X}(y\mid x)^{\alpha}}$, and  
the maximum in \eqref{eq:max_characterization_Arimoto_MI} is achieved at $q_{X\mid Y}^{*}(x | y) := \frac{p_{X}(x)^{\alpha}p_{Y\mid X}(y\mid x)^{\alpha}}{\sum_{x}p_{X}(x)^{\alpha}p_{Y\mid X}(y\mid x)^{\alpha}}$. 
\end{prop}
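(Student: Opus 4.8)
The plan is to prove both characterizations by one and the same argument: reduce each maximization over the family $q_{X\mid Y}=\{q_{X\mid Y}(\cdot\mid y)\}_{y\in\mathcal{Y}}$ to $\abs{\mathcal{Y}}$ independent optimizations over the simplex $\Delta_{\mathcal{X}}$, and solve each of those with H\"older's inequality. Fix $y\in\mathcal{Y}$, write $t_x:=q_{X\mid Y}(x\mid y)$ and $\beta:=\tfrac{\alpha-1}{\alpha}$, and set $a_x(y):=p_X(x)^{1/\alpha}p_{Y\mid X}(y\mid x)$ in the Sibson case and $a_x(y):=p_X(x)p_{Y\mid X}(y\mid x)$ in the Arimoto case. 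Both objectives depend on $q_{X\mid Y}$ only through the numbers $S(y):=\sum_{x}a_x(y)\,t_x^{\beta}$: using $-\tfrac{\alpha}{1-\alpha}=\tfrac{\alpha}{\alpha-1}$,
\begin{align}
F_\alpha^{\text{S}}(p_X,q_{X\mid Y})&=\frac{\alpha}{\alpha-1}\log\sum_{y}S(y),\\
F_\alpha^{\text{A}}(p_X,q_{X\mid Y})&=H_\alpha(p_X)+\frac{\alpha}{\alpha-1}\log\sum_{y}S(y).
\end{align}
Since $\log$ is increasing and $\tfrac{\alpha}{\alpha-1}>0$ when $\alpha>1$ but $\tfrac{\alpha}{\alpha-1}<0$ when $\alpha\in(0,1)$, maximizing $F_\alpha^{(\cdot)}$ amounts to maximizing $\sum_{y}S(y)$ for $\alpha>1$ and minimizing it for $\alpha\in(0,1)$; and as the distributions $q_{X\mid Y}(\cdot\mid y)$ for distinct $y$ are mutually unconstrained, it suffices to optimize each $S(y)$ separately over $t\in\Delta_{\mathcal{X}}$.

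The estimate that does all the work is: for nonnegative $(a_x)_{x\in\mathcal{X}}$ and $t\in\Delta_{\mathcal{X}}$,
\begin{align}
\sum_{x}a_x\,t_x^{\beta}&\le\left(\sum_{x}a_x^{\alpha}\right)^{1/\alpha}\qquad(\alpha>1),\\
\sum_{x}a_x\,t_x^{\beta}&\ge\left(\sum_{x}a_x^{\alpha}\right)^{1/\alpha}\qquad(\alpha\in(0,1)),
\end{align}
with equality in either case if and only if $t_x\propto a_x^{\alpha}$. For $\alpha>1$ this is H\"older's inequality applied to $\sum_x a_x\cdot t_x^{\beta}$ with conjugate exponents $p=\alpha$ and $p'=\tfrac{1}{\beta}=\tfrac{\alpha}{\alpha-1}$, together with $\sum_x t_x=1$; for $\alpha\in(0,1)$ it is the reverse H\"older inequality for the same pair of exponents (now $0<p<1$ and $p'<0$). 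Substituting the two choices of $a_x(y)$, the optimal value of $S(y)$ equals $\left(\sum_x p_X(x)p_{Y\mid X}(y\mid x)^{\alpha}\right)^{1/\alpha}$ (Sibson) and $\left(\sum_x p_X(x)^{\alpha}p_{Y\mid X}(y\mid x)^{\alpha}\right)^{1/\alpha}$ (Arimoto), attained at $t_x=a_x(y)^{\alpha}/\sum_{x'}a_{x'}(y)^{\alpha}$ — which is exactly the $q_{X\mid Y}^{*}$ asserted in each case. Feeding the optimal $\sum_{y}S(y)$ back into the two displays above and recalling the closed form of $I_\alpha^{\text{S}}(X;Y)$ and the identity $I_\alpha^{\text{A}}(X;Y)=H_\alpha(X)-H_\alpha^{\text{A}}(X\mid Y)$ together with the formula for $H_\alpha^{\text{A}}(X\mid Y)$ from Section~\ref{ssec:alpha_MI} yields \eqref{eq:max_characterization_Sibson_MI} and \eqref{eq:max_characterization_Arimoto_MI}.

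The step I expect to demand the most care is the case split at $\alpha=1$: passing from $\alpha>1$ to $\alpha\in(0,1)$ simultaneously reverses the direction of the H\"older inequality and the monotonicity of $q\mapsto\tfrac{\alpha}{\alpha-1}\log q$, and the argument closes only because these two reversals cancel, so that the same $q_{X\mid Y}^{*}$ is optimal for every $\alpha\neq1$. A secondary technicality is the blow-up of $t_x^{\beta}$ at $t_x=0$ when $\beta<0$ (i.e. $\alpha\in(0,1)$): one either restricts the minimization to strictly positive $t$ and passes to the closure, noting that $S(y)\to+\infty$ on the boundary so the infimum is not lost, or observes that the claimed $q_{X\mid Y}^{*}$ is supported exactly on $\{x:a_x(y)>0\}$ and disposes of the remaining coordinates by continuity; the degenerate cases $p_X(x)=0$ and $p_{Y\mid X}(y\mid x)=0$ are handled the same way.
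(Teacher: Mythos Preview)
Your proof is correct and follows essentially the same approach the paper takes in Appendix~\ref{proof:novel_max_characterization}: apply H\"older's inequality (Lemma~\ref{lemma:Holder}) with exponents $p=\alpha$ and $p'=\tfrac{\alpha}{\alpha-1}$ separately for each $y$, use $\sum_x q_{X\mid Y}(x\mid y)=1$ to collapse one factor, and observe that for $\alpha\in(0,1)$ the reversal of H\"older's inequality is exactly compensated by the sign flip of $\tfrac{\alpha}{\alpha-1}$. Your treatment is in fact more explicit than the paper's about the case split and the boundary behavior at $t_x=0$, but the core argument is identical.
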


In Algorithm \ref{alg:aba}, by replacing $F(p_{X}, q_{X\mid Y})$ with $F_{\alpha}^{\text{S}}(p_{X}, q_{X\mid Y})$ and $F_{\alpha}^{\text{A}}(p_{X}, q_{X\mid Y})$, 
iterative algorithms for calculating $C_{\alpha}^{\text{S}}=\max_{p_{X}}\max_{q_{X\mid Y}}F_{\alpha}^{\text{S}}(p_{X}, q_{X\mid Y})$ (Algorithm S1) and 
$C_{\alpha}^{\text{A}}=\max_{p_{X}}\max_{q_{X\mid Y}}F_{\alpha}^{\text{A}}(p_{X}, q_{X\mid Y})$ (Algorithm A1) are derived.
The update formulae of $p_{X}^{(k)}$ and $q_{X\mid Y}^{(k)}$ for each algorithm are presented in Table \ref{tab:update_formula}. 

\begin{table*}[h]
  \caption{Formulae for updating $p_{X}^{(k)}$ and $q_{X\mid Y}^{(k)}$ in the Arimoto--Blahut Algorithm for calculating $C, C_{\alpha}^{\text{S}}$ and $C_{\alpha}^{\text{A}}$}
  \label{tab:update_formula}
  \resizebox{1.\textwidth}{!}{
  \centering
  \begin{tabular}{@{} cccc @{}}
    \toprule
    Name & $F_{\alpha}^{(\cdot)}(p_{X}, q_{X\mid Y})$ & $p_{X}^{(k)}$ & $q_{X\mid Y}^{(k)}$  \\ 
    \midrule
    \begin{tabular}{c}
    Algorithm 1 for \\ 
    computing $C$ \cite{1054753}, \cite{1054855}
    \end{tabular}
    & \begin{tabular}{c}
    $\vE_{X, Y}^{p_{X}p_{Y\mid X}} \left[\log \frac{q_{X\mid Y}(X\mid Y)}{p_{X}(X)}\right]$ 
    \end{tabular}
    & $\frac{\prod_{y} q_{X\mid Y}^{(k-1)}(x\mid y)^{p_{Y\mid X}(y\mid x)}}{\sum_{x}\prod_{y} q_{X\mid Y}^{(k-1)}(x\mid y)^{p_{Y\mid X}(y\mid x)}}$ 
    & $\frac{p_{X}^{(k)}(x)p_{Y\mid X}(y\mid x)}{\sum_{x}p_{X}^{(k)}(x) p_{Y\mid X}(y\mid x)}$  \\ 
    \midrule
    \begin{tabular}{c}
    Algorithm S1 for \\
    computing $C_{\alpha}^{\text{S}}$ \cite{arimoto1977,1055640}
    \end{tabular}
    & \begin{tabular}{c}
    $\frac{\alpha}{\alpha-1}\log \sum_{x, y}p_{X}(x)^{\frac{1}{\alpha}} p_{Y\mid X}(y\mid x) q_{X\mid Y}(x\mid y)^{\frac{\alpha-1}{\alpha}}$ 
    \end{tabular}
    & $\frac{\left( \sum_{y}p_{Y\mid X}(y\mid x)q^{(k-1)}_{X\mid Y}(x\mid y)^{\frac{\alpha-1}{\alpha}} \right)^{\frac{\alpha}{\alpha-1}}}{\sum_{x} \left( \sum_{y}p_{Y\mid X}(y\mid x)q^{(k-1)}_{X\mid Y}(x\mid y)^{\frac{\alpha-1}{\alpha}} \right)^{\frac{\alpha}{\alpha-1}}}$ 
    & $\frac{p^{(k)}_{X}(x)p_{Y\mid X}(y\mid x)^{\alpha}}{\sum_{x}p^{(k)}_{X}(x)p_{Y\mid X}(y\mid x)^{\alpha}}$  \\ 
    \begin{tabular}{c}
    Algorithm S2 for \\
    computing $C_{\alpha}^{\text{S}}$ \\ 
    (\textbf{This work}) 
    \end{tabular}
    & \begin{tabular}{c}
    $\frac{\alpha}{\alpha-1} \log \sum_{x, y} p_{X}(x)^{\frac{1}{\alpha}}p_{Y\mid X}(y\mid x) q_{X_{\alpha}\mid Y}(x\mid y)^{\frac{\alpha-1}{\alpha}}$ 
    \end{tabular}
    & $\frac{\left( \sum_{y}p_{Y\mid X}(y\mid x) q_{X_{\alpha}\mid Y}^{(k-1)}(x\mid y)^{\frac{\alpha-1}{\alpha}} \right)^{\frac{\alpha}{\alpha-1}}}{\sum_{x}\left( \sum_{y}p_{Y\mid X}(y\mid x) q_{X_{\alpha}\mid Y}^{(k-1)}(x\mid y)^{\frac{\alpha-1}{\alpha}} \right)^{\frac{\alpha}{\alpha-1}}}$ 
    & $\frac{p_{X}^{(k)}(x)^{\frac{1}{\alpha}}p_{Y\mid X}(y\mid x)}{\sum_{x} p_{X}^{(k)}(x)^{\frac{1}{\alpha}}p_{Y\mid X}(y\mid x)}$  \\ 
    \midrule 
    \begin{tabular}{c}
    Algorithm A1 for \\
    computing $C_{\alpha}^{\text{A}}$ \cite{BN01990060en}
    \end{tabular}
    & \begin{tabular}{c}
    $\frac{\alpha}{\alpha-1} \log \sum_{x, y} p_{X_{\alpha}}(x)^{\frac{1}{\alpha}}p_{Y\mid X}(y\mid x) q_{X\mid Y}(x\mid y)^{\frac{\alpha-1}{\alpha}}$ 
    \end{tabular}
    & $\frac{\left( \sum_{y}p_{Y\mid X}(y\mid x)q_{X\mid Y}^{(k-1)}(x\mid y)^{\frac{\alpha-1}{\alpha}} \right)^{\frac{1}{\alpha-1}}}{\sum_{x}\left( \sum_{y}p_{Y\mid X}(y\mid x)q_{X\mid Y}^{(k-1)}(x\mid y)^{\frac{\alpha-1}{\alpha}} \right)^{\frac{1}{\alpha-1}}}$ 
    & $\frac{p_{X}^{(k)}(x)^{\alpha}p_{Y\mid X}(y\mid x)^{\alpha}}{\sum_{x}p_{X}^{(k)}(x)^{\alpha}p_{Y\mid X}(y\mid x)^{\alpha}}$  \\ 
    \begin{tabular}{c}
    Algorithm A2 for \\
    computing $C_{\alpha}^{\text{A}}$ \\ 
    (\textbf{This work}) 
    \end{tabular}
    & \begin{tabular}{c}
    $\frac{\alpha}{\alpha-1} \log \sum_{x, y} p_{X_{\alpha}}(x)^{\frac{1}{\alpha}}p_{Y\mid X}(y\mid x) q_{X_{\alpha}\mid Y}(x\mid y)^{\frac{\alpha-1}{\alpha}}$ 
    \end{tabular}
    & $\frac{\left( \sum_{y}p_{Y\mid X}(y\mid x) q_{X_{\alpha}\mid Y}^{(k-1)}(x\mid y)^{\frac{\alpha-1}{\alpha}} \right)^{\frac{1}{\alpha-1}}}{\sum_{x}\left( \sum_{y}p_{Y\mid X}(y\mid x) q_{X_{\alpha}\mid Y}^{(k-1)}(x\mid y)^{\frac{\alpha-1}{\alpha}} \right)^{\frac{1}{\alpha-1}}}$ 
    & $\frac{p_{X}^{(k)}(x)p_{Y\mid X}(y\mid x)}{\sum_{x}p_{X}^{(k)}(x)p_{Y\mid X}(y\mid x)}$  \\ 
    \bottomrule
  \end{tabular}
  }
\end{table*}

\begin{remark}
Although Arimoto provided the proof for \eqref{eq:max_characterization_Sibson_MI} using the Karush--Kuhn--Tucker (KKT) condition described in \cite{1055640}, 
he did not provide explicit proof for \eqref{eq:max_characterization_Arimoto_MI} described in \cite{BN01990060en}. 
In Appendix \ref{proof:novel_max_characterization}, we provide an alternative proof for Proposition \ref{prop:max_characterization_Sibson_Arimoto_MI} using H\"{o}lder's inequality. 
\end{remark}

\begin{remark} \label{remark:observation}
Note that a simple calculation shows that $F_{\alpha}^{\text{A}}(p_{X}, q_{X\mid Y}) = F_{\alpha}^{\text{S}}(p_{X_{\alpha}}, q_{X\mid Y})$.
\end{remark}

\section{New Algorithms for Calculating Sibson and Arimoto Capacities} \label{sec:new_ABA}
In this section, we propose new algorithms for calculating the Sibson and Arimoto capacities 
by presenting novel variational characterizations of Sibson and Arimoto MI. 
For this purpose, we employ the $\alpha$-tilted distribution and H\"{o}lder's inequality \cite{Holder1889}.

Based on Remark \ref{remark:observation}, we consider the following objective functions for alternating optimization algorithms; 
these functions are defined as follows:
\begin{align}
&\tilde{F}_{\alpha}^{S}(p_{X}, q_{X\mid Y}) := F_{\alpha}^{\text{S}}(p_{X}, q_{X_{\alpha}\mid Y}) \label{eq:tilde_F_S} \\ 
&= \frac{\alpha}{\alpha-1} \log \sum_{x, y} p_{X}(x)^{\frac{1}{\alpha}}p_{Y\mid X}(y\mid x) q_{X_{\alpha}\mid Y}(x\mid y)^{\frac{\alpha-1}{\alpha}},  \\ 
&\tilde{F}_{\alpha}^{\text{A}}(p_{X}, q_{X\mid Y}) := {F}_{\alpha}^{\text{A}}(p_{X}, q_{X_{\alpha}\mid Y}) = {F}_{\alpha}^{\text{S}}(p_{X_{\alpha}}, q_{X_{\alpha}\mid Y}) \label{eq:tilde_F_A} \\ 
&= \frac{\alpha}{\alpha-1} \log \sum_{x, y} p_{X_{\alpha}}(x)^{\frac{1}{\alpha}}p_{Y\mid X}(y\mid x) q_{X_{\alpha}\mid Y}(x\mid y)^{\frac{\alpha-1}{\alpha}}, 
\end{align}
where $q_{X_{\alpha}\mid Y}=\{q_{X_{\alpha}\mid Y}(\cdot \mid y)\}_{y\in \mathcal{Y}}$ is a set of the $\alpha$-tilted distribution of $q_{X\mid Y}(\cdot \mid y)$, which is  
defined as $q_{X_{\alpha}\mid Y}(x | y) := \frac{q_{X\mid Y}(x\mid y)^{\alpha}}{\sum_{x}q_{X\mid Y}(x\mid y)^{\alpha}}$. 
In the following, we provide novel variational characterizations of Sibson and Arimoto MI. 

\begin{theorem} \label{thm:novel_max_characterization}
\begin{align}
I_{\alpha}^{\text{S}}(X; Y) &= \max_{q_{X\mid Y}} \tilde{F}_{\alpha}^{\text{S}}(p_{X}, q_{X\mid Y}), \label{eq:novel_max_characterization_Sibson_MI} \\ 
I_{\alpha}^{\text{A}}(X; Y) &= \max_{q_{X\mid Y}} \tilde{F}_{\alpha}^{\text{A}}(p_{X}, q_{X\mid Y}), \label{eq:novel_max_characterization_Arimoto_MI}
\end{align}
where the maximum in \eqref{eq:novel_max_characterization_Sibson_MI} is achieved at $q_{X\mid Y}^{*}(x | y) := \frac{p_{X}(x)^{\frac{1}{\alpha}}p_{Y\mid X}(y | x)}{\sum_{y}p_{X}(x)^{\frac{1}{\alpha}}p_{Y\mid X}(y | x)}$, and the maximum in \eqref{eq:novel_max_characterization_Arimoto_MI} is achieved at $q_{X\mid Y}^{*}(x | y) := p_{X\mid Y}(x|y)=\frac{p_{X}(x)p_{Y\mid X}(y | x)}{\sum_{y}p_{X}(x)p_{Y\mid X}(y | x)}$. 
\end{theorem}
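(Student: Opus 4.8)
The plan is to reduce both identities to the variational characterizations already established in Proposition~\ref{prop:max_characterization_Sibson_Arimoto_MI}, rather than to redo any inequality work. The starting observation is the defining relations \eqref{eq:tilde_F_S} and \eqref{eq:tilde_F_A}: the new objectives are the old ones \emph{precomposed with $\alpha$-tilting of the conditional distribution}, i.e.\ $\tilde F_{\alpha}^{\text{S}}(p_{X},q_{X\mid Y})=F_{\alpha}^{\text{S}}(p_{X},q_{X_{\alpha}\mid Y})$ and $\tilde F_{\alpha}^{\text{A}}(p_{X},q_{X\mid Y})=F_{\alpha}^{\text{A}}(p_{X},q_{X_{\alpha}\mid Y})$. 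By Proposition~\ref{prop:properies_alpha_dist}(1), for each fixed $y$ the map $q_{X\mid Y}(\cdot\mid y)\mapsto q_{X_{\alpha}\mid Y}(\cdot\mid y)$ is a bijection of $\Delta_{\mathcal{X}}$ onto itself whose inverse is $\tfrac{1}{\alpha}$-tilting; it preserves supports and is continuous, hence is a homeomorphism of the compact simplex. Taking the product over $y\in\mathcal{Y}$, the map $q_{X\mid Y}\mapsto q_{X_{\alpha}\mid Y}$ is a bijection of the set of all conditional distributions onto itself---exactly the reparametrization needed.

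For \eqref{eq:novel_max_characterization_Sibson_MI} I would substitute $r_{X\mid Y}:=q_{X_{\alpha}\mid Y}$; as $q_{X\mid Y}$ ranges over all conditional distributions, so does $r_{X\mid Y}$, hence
\begin{align*}
\max_{q_{X\mid Y}}\tilde F_{\alpha}^{\text{S}}(p_{X},q_{X\mid Y})
&=\max_{q_{X\mid Y}}F_{\alpha}^{\text{S}}(p_{X},q_{X_{\alpha}\mid Y})\\
&=\max_{r_{X\mid Y}}F_{\alpha}^{\text{S}}(p_{X},r_{X\mid Y})=I_{\alpha}^{\text{S}}(X;Y),
\end{align*}
the last equality being \eqref{eq:max_characterization_Sibson_MI}. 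The maximizer of $\tilde F_{\alpha}^{\text{S}}(p_{X},\cdot)$ is the $\alpha$-tilting preimage of the maximizer $r^{*}_{X\mid Y}(x\mid y)\propto p_{X}(x)p_{Y\mid X}(y\mid x)^{\alpha}$ from Proposition~\ref{prop:max_characterization_Sibson_Arimoto_MI}, i.e.\ $q^{*}_{X\mid Y}(x\mid y)\propto r^{*}_{X\mid Y}(x\mid y)^{1/\alpha}$; since the $x$-independent normalizing constant of $r^{*}_{X\mid Y}(\cdot\mid y)$ cancels, this is $q^{*}_{X\mid Y}(x\mid y)\propto p_{X}(x)^{1/\alpha}p_{Y\mid X}(y\mid x)$, as claimed.

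The Arimoto identity \eqref{eq:novel_max_characterization_Arimoto_MI} is handled identically, now using \eqref{eq:max_characterization_Arimoto_MI} (equivalently Remark~\ref{remark:observation}, which rewrites $F_{\alpha}^{\text{A}}$ at input $p_{X}$ as $F_{\alpha}^{\text{S}}$ at input $p_{X_{\alpha}}$): the same change of variables yields $\max_{q_{X\mid Y}}\tilde F_{\alpha}^{\text{A}}(p_{X},q_{X\mid Y})=\max_{r_{X\mid Y}}F_{\alpha}^{\text{A}}(p_{X},r_{X\mid Y})=I_{\alpha}^{\text{A}}(X;Y)$, and pulling the corresponding maximizer $r^{*}_{X\mid Y}(x\mid y)\propto p_{X}(x)^{\alpha}p_{Y\mid X}(y\mid x)^{\alpha}$ back through $\tfrac{1}{\alpha}$-tilting gives $q^{*}_{X\mid Y}(x\mid y)\propto p_{X}(x)p_{Y\mid X}(y\mid x)$, i.e.\ $q^{*}_{X\mid Y}(\cdot\mid y)=p_{X\mid Y}(\cdot\mid y)$.

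The one place needing care is the change of variables itself: one must check that $q_{X\mid Y}\mapsto q_{X_{\alpha}\mid Y}$ is a bijection of the \emph{whole} closed simplex, so that both maxima are over the same feasible set, and that the objectives behave on the boundary---for $0<\alpha<1$ the exponent $\tfrac{\alpha-1}{\alpha}$ is negative, so $\tilde F_{\alpha}^{\text{S}}$ and $\tilde F_{\alpha}^{\text{A}}$ tend to $-\infty$ there and the maxima are attained in the interior, where everything is smooth. I do not expect a deeper difficulty. A route independent of Proposition~\ref{prop:max_characterization_Sibson_Arimoto_MI} would be to repeat the H\"older's-inequality argument of Appendix~\ref{proof:novel_max_characterization} directly for $\tilde F_{\alpha}^{\text{S}}$ and $\tilde F_{\alpha}^{\text{A}}$; the genuine obstacle there is choosing the right H\"older exponents and weight.
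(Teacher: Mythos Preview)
Your proof is correct and takes a genuinely different route from the paper's. The paper proves Theorem~\ref{thm:novel_max_characterization} by applying H\"older's inequality directly to $\tilde F_{\alpha}^{\text{S}}(p_{X},q_{X\mid Y})$ (for each fixed $y$), obtaining the bound $\tilde F_{\alpha}^{\text{S}}(p_{X},q_{X\mid Y})\le I_{\alpha}^{\text{S}}(X;Y)$ and reading off the maximizer from the equality condition; it then remarks that the same computation furnishes an alternative proof of Proposition~\ref{prop:max_characterization_Sibson_Arimoto_MI}. You instead exploit the defining relations~\eqref{eq:tilde_F_S}--\eqref{eq:tilde_F_A} together with the fact that $\alpha$-tilting is a self-bijection of the simplex (Proposition~\ref{prop:properies_alpha_dist}(1)) to change variables and reduce to Proposition~\ref{prop:max_characterization_Sibson_Arimoto_MI}---effectively Proposition~\ref{prop:properies_alpha_dist}(2) applied in the $q_{X\mid Y}$ argument. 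Your argument is more conceptual and makes transparent \emph{why} the new characterizations hold (they are reparametrizations of the old ones), and the maximizers drop out by $\tfrac{1}{\alpha}$-tilting the known ones. The paper's H\"older route buys self-containment: it does not presuppose Proposition~\ref{prop:max_characterization_Sibson_Arimoto_MI}, which matters here because, as the paper itself notes, the Arimoto case~\eqref{eq:max_characterization_Arimoto_MI} was originally stated by Arimoto without proof, and Appendix~\ref{proof:novel_max_characterization} is where the paper supplies that missing proof. Your reduction is perfectly valid once~\eqref{eq:max_characterization_Sibson_MI} is granted (and, via Remark~\ref{remark:observation}, that suffices for both cases), but it does not stand independently of the earlier proposition the way the paper's argument does.
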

\begin{proof}
See Appendix \ref{proof:novel_max_characterization}. 
\end{proof}

\begin{theorem} \label{thm:novel_update_formulae}
\begin{enumerate}
\item For a fixed $p_{X}$, $\tilde{F}_{\alpha}^{\text{S}}(p_{X}, q_{X\mid Y})$ is maximized by 
\begin{align}
q_{X\mid Y}^{*}(x\mid y) &= \frac{p_{X}(x)^{\frac{1}{\alpha}}p_{Y\mid X}(y | x)}{\sum_{x}p_{X}(x)^{\frac{1}{\alpha}}p_{Y\mid X}(y | x)}.
\end{align}
\item For a fixed $q_{X\mid Y}$, $\tilde{F}_{\alpha}^{\text{S}}(p_{X}, q_{X\mid Y})$ is maximized by 
\begin{align}
p_{X}^{*}(x) 
&= \frac{\left( \sum_{y}p_{Y\mid X}(y\mid x) q_{X_{\alpha}\mid Y}(x\mid y)^{\frac{\alpha-1}{\alpha}} \right)^{\frac{\alpha}{\alpha-1}}}{\sum_{x}\left( \sum_{y}p_{Y\mid X}(y\mid x) q_{X_{\alpha}\mid Y}(x\mid y)^{\frac{\alpha-1}{\alpha}} \right)^{\frac{\alpha}{\alpha-1}}}.
\end{align}
\item For a fixed $p_{X}$, $\tilde{F}_{\alpha}^{\text{A}}(p_{X}, q_{X\mid Y})$ is maximized by 
\begin{align}
q_{X\mid Y}^{*}(x\mid y) &= \frac{p_{X}(x)p_{Y\mid X}(y | x)}{\sum_{x}p_{X}(x)p_{Y\mid X}(y | x)}.
\end{align}
\item For a fixed $q_{X\mid Y}$, $\tilde{F}_{\alpha}^{\text{A}}(p_{X}, q_{X\mid Y})$ is maximized by 
\begin{align}
p_{X}^{*}(x) 
&= \frac{\left( \sum_{y}p_{Y\mid X}(y\mid x)q_{X_{\alpha}\mid Y}(x\mid y)^{\frac{\alpha-1}{\alpha}} \right)^{\frac{1}{\alpha-1}}}{\sum_{x} \left( \sum_{y}p_{Y\mid X}(y\mid x)q_{X_{\alpha}\mid Y}(x\mid y)^{\frac{\alpha-1}{\alpha}} \right)^{\frac{1}{\alpha-1}}}.
\end{align}
\end{enumerate}
\end{theorem}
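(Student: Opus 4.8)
The plan is to establish items 1 and 3 as direct consequences of Theorem~\ref{thm:novel_max_characterization}, and to prove items 2 and 4 by a H\"older-inequality argument, with item 4 reduced to item 2 via the $\alpha$-tilting bijection. Indeed, for fixed $p_{X}$ the distributions maximizing $\tilde{F}_{\alpha}^{\text{S}}(p_{X},\cdot)$ and $\tilde{F}_{\alpha}^{\text{A}}(p_{X},\cdot)$ over $q_{X\mid Y}$ are precisely the ones exhibited in Theorem~\ref{thm:novel_max_characterization}, so items 1 and 3 require no further work once that theorem is in hand; the substance of the present statement is therefore in the two update formulae for $p_{X}^{*}$.

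For item 2, I would fix $q_{X\mid Y}$ and abbreviate $a_{x}:=\sum_{y}p_{Y\mid X}(y\mid x)\,q_{X_{\alpha}\mid Y}(x\mid y)^{(\alpha-1)/\alpha}$, so that $\tilde{F}_{\alpha}^{\text{S}}(p_{X},q_{X\mid Y})=\frac{\alpha}{\alpha-1}\log\sum_{x}p_{X}(x)^{1/\alpha}a_{x}$. Applying H\"older's inequality to the pair $\bigl(p_{X}(x)^{1/\alpha},a_{x}\bigr)$ with conjugate exponents $\alpha$ and $\alpha/(\alpha-1)$ --- the ordinary inequality ($\le$) when $\alpha>1$, the reverse inequality ($\ge$) when $\alpha\in(0,1)$ --- and using $\sum_{x}p_{X}(x)=1$, the right-hand side equals $\bigl(\sum_{x}a_{x}^{\alpha/(\alpha-1)}\bigr)^{(\alpha-1)/\alpha}$, which is independent of $p_{X}$. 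Multiplying by $\frac{\alpha}{\alpha-1}$, whose sign also changes between the two regimes, and taking logarithms, both cases collapse to the uniform bound $\tilde{F}_{\alpha}^{\text{S}}(p_{X},q_{X\mid Y})\le\log\sum_{x}a_{x}^{\alpha/(\alpha-1)}$, attained with equality exactly when $p_{X}(x)\propto a_{x}^{\alpha/(\alpha-1)}$, i.e.\ at the claimed $p_{X}^{*}$.

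For item 4, I would first record the identity $\tilde{F}_{\alpha}^{\text{A}}(p_{X},q_{X\mid Y})=\tilde{F}_{\alpha}^{\text{S}}(p_{X_{\alpha}},q_{X\mid Y})$, obtained by chaining \eqref{eq:tilde_F_A}, Remark~\ref{remark:observation}, and \eqref{eq:tilde_F_S}. Since $p_{X}\mapsto p_{X_{\alpha}}$ is a bijection of $\Delta_{\mathcal{X}}$ onto itself (Proposition~\ref{prop:properies_alpha_dist}, item~1), maximizing $\tilde{F}_{\alpha}^{\text{A}}(\cdot,q_{X\mid Y})$ over $p_{X}$ is equivalent to maximizing $\tilde{F}_{\alpha}^{\text{S}}(\cdot,q_{X\mid Y})$ over all distributions, which by item~2 is achieved at $r_{X}^{*}(x)\propto a_{x}^{\alpha/(\alpha-1)}$ with the same $a_{x}$. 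Hence $\tilde{F}_{\alpha}^{\text{A}}$ is maximized at the unique $p_{X}^{*}$ whose $\alpha$-tilted distribution equals $r_{X}^{*}$; computing this $(1/\alpha)$-tilted distribution and simplifying $\bigl(a_{x}^{\alpha/(\alpha-1)}\bigr)^{1/\alpha}=a_{x}^{1/(\alpha-1)}$ yields the stated formula.

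The main obstacle I anticipate is bookkeeping rather than conceptual: one must track the direction of H\"older's inequality (forward versus reverse) jointly with the sign of $\frac{\alpha}{\alpha-1}$ so that, after exponentiation and logarithm, the inequality points consistently in both regimes $\alpha\in(0,1)$ and $\alpha\in(1,\infty)$; and one must handle the degenerate coordinates where $p_{X}(x)$ or $q_{X\mid Y}(x\mid y)$ vanishes, which are delicate because the exponents $(\alpha-1)/\alpha$ and $\alpha/(\alpha-1)$ may be negative, and then verify that H\"older's equality condition produces exactly the normalized distributions asserted.
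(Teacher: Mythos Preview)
Your proposal is correct and matches the paper's approach: items 1 and 3 are obtained directly from Theorem~\ref{thm:novel_max_characterization}, and item 2 is proved by exactly the H\"older argument you describe (applied to the pair $\bigl(p_{X}(x)^{1/\alpha},\,a_{x}\bigr)$ with exponents $\alpha$ and $\alpha/(\alpha-1)$, tracking the sign of $\frac{\alpha}{\alpha-1}$). The only difference is cosmetic: for item 4 the paper simply says it ``can be proved similarly,'' whereas you make the reduction explicit via the identity $\tilde{F}_{\alpha}^{\text{A}}(p_{X},q_{X\mid Y})=\tilde{F}_{\alpha}^{\text{S}}(p_{X_{\alpha}},q_{X\mid Y})$ and the tilting bijection of Proposition~\ref{prop:properies_alpha_dist}\,(1); this is a clean and arguably preferable way to cash out ``similarly,'' but it is not a different route.
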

\begin{proof}
See Appendix \ref{proof:novel_update_formulae}. 
\end{proof}

Using Theorem \ref{thm:novel_max_characterization} and Theorem \ref{thm:novel_update_formulae}, 
Algorithm S2 and A2 are derived by replacing $F(p_{X}, q_{X\mid Y})$ in Algorithm \ref{alg:aba} 
with $\tilde{F}_{\alpha}^{\text{S}}(p_{X}, q_{X\mid Y})$ and $\tilde{F}_{\alpha}^{\text{A}}(p_{X}, q_{X\mid Y})$. 
The formulae used for updating $p_{X}^{(k)}$ and $q_{X\mid Y}^{(k)}$ in these algorithms are presented in Table \ref{tab:update_formula}.

\section{Equivalence of algorithms}\label{sec:equivalence}
In this section, we prove the equivalence of alternating optimization algorithms used for computing the Sibson and Arimoto capacities. 

\subsection{Equivalence of algorithms}

Here, we denote the objective functions of each iterative algorithm as 
$F_{\alpha}^{\text{S1}}(p_{X}, q_{X\mid Y}):=F_{\alpha}^{\text{S}}(p_{X}, q_{X\mid Y})$, 
$F_{\alpha}^{\text{S2}}(p_{X}, q_{X\mid Y}):=\tilde{F}_{\alpha}^{\text{S}}(p_{X}, q_{X\mid Y})$, 
$F_{\alpha}^{\text{A1}}(p_{X}, q_{X\mid Y}):=F_{\alpha}^{\text{A}}(p_{X}, q_{X\mid Y})$, and
$F_{\alpha}^{\text{A2}}(p_{X}, q_{X\mid Y}):=\tilde{F}_{\alpha}^{\text{A}}(p_{X}, q_{X\mid Y})$. 
Let $\{p_{X}^{(k), (\cdot)}\}_{k=0}^{\infty}$ and $\{q_{X\mid Y}^{(k), (\cdot)}\}_{k=0}^{\infty}$ be sequences of distributions obtained from each algorithm using the updating formulae presented in Table \ref{alg:aba}, where $p_{X}^{(0), (\cdot)}$ is the initial distribution of each algorithm. 
Let $\{F_{\alpha}^{(k, k), (\cdot)}\}_{k=0}^{\infty}$ and $\{F_{\alpha}^{(k+1, k), (\cdot)}\}_{k=0}^{\infty}$ be sequences of values of objective functions 
defined as $F_{\alpha}^{(k, k), (\cdot)} := F_{\alpha}^{(\cdot)}(p_{X}^{(k)}, q_{X\mid Y}^{(k)})$ and $F_{\alpha}^{(k+1, k), (\cdot)} := F_{\alpha}^{(\cdot)}(p_{X}^{(k+1)}, q_{X\mid Y}^{(k)})$.
We obtain the following equivalence results.

\begin{theorem} \label{thm:equivalence_S1_S2_A1_A2} 
Let $\alpha\in (0, 1)\cup (1, \infty)$. 
Suppose that $p_{X}^{(0), \text{S1}} = p_{X}^{(0), \text{S2}}$ and $p_{X}^{(0), \text{A1}} = p_{X}^{(0), \text{A2}}$. Then, 
\begin{enumerate}
\item For all $k\in \bn$, 
\begin{align}
p_{X}^{(k), \text{S1}} &= p_{X}^{(k), \text{S2}}, & q_{X\mid Y}^{(k), \text{S1}} &= q_{X_{\alpha}\mid Y}^{(k), \text{S2}}, \label{eq:equivalence_S1_S2_dist} \\ 
F_{\alpha}^{(k, k), \text{S1}} &= F_{\alpha}^{(k, k), \text{S2}}, & F_{\alpha}^{(k+1, k), \text{S1}} &= F_{\alpha}^{(k+1, k), \text{S2}}. \label{eq:equivalence_S1_S2_value}
\end{align}
\item For all $k\in \bn$, 
\begin{align}
p_{X}^{(k), \text{A1}} &= p_{X}^{(k), \text{A2}}, & q_{X\mid Y}^{(k), \text{A1}} &= q_{X_{\alpha}\mid Y}^{(k), \text{A2}}, \label{eq:equivalence_A1_A2_dist} \\ 
F_{\alpha}^{(k, k), \text{A1}} &= F_{\alpha}^{(k, k), \text{A2}}, & F_{\alpha}^{(k+1, k), \text{A1}} &= F_{\alpha}^{(k+1, k), \text{A2}}. \label{eq:equivalence_A1_A2_value}
\end{align}
\end{enumerate}
Similarly, suppose that $p_{X}^{(0), \text{S1}} = p_{X_{\alpha}}^{(0), \text{A1}}$ and $p_{X}^{(0), \text{S2}} = p_{X_{\alpha}}^{(0), \text{A2}}$. Then, 
\begin{enumerate}
\item[3)] For all $k\in \bn$, 
\begin{align}
p_{X}^{(k), \text{S1}} &= p_{X_{\alpha}}^{(k), \text{A1}}, & q_{X\mid Y}^{(k), \text{S1}} &= q_{X\mid Y}^{(k), \text{A1}}, \label{eq:equivalence_S1_A1_dist} \\ 
F_{\alpha}^{(k, k), \text{S1}} &= F_{\alpha}^{(k, k), \text{A1}}, & F_{\alpha}^{(k+1, k), \text{S1}} &= F_{\alpha}^{(k+1, k), \text{A1}}. \label{eq:equivalence_S1_A1_value}
\end{align}
\item[4)] For all $k\in \bn$, 
\begin{align}
p_{X}^{(k), \text{S2}} &= p_{X_{\alpha}}^{(k), \text{A2}}, & q_{X\mid Y}^{(k), \text{S2}} &= q_{X\mid Y}^{(k), \text{A2}}, \label{eq:equivalence_S2_A2_dist}\\ 
F_{\alpha}^{(k, k), \text{S2}} &= F_{\alpha}^{(k, k), \text{A2}}, & F_{\alpha}^{(k+1, k), \text{S2}} &= F_{\alpha}^{(k+1, k), \text{A2}}. \label{eq:equivalence_S2_A2_value}
\end{align}
\end{enumerate}
\end{theorem}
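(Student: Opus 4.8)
The plan is to prove all four parts by induction on $k$, exploiting the algebraic relations between the objective functions established in Remark \ref{remark:observation} and equations \eqref{eq:tilde_F_S}--\eqref{eq:tilde_F_A}, namely $\tilde F_\alpha^{\text{S}}(p_X,q_{X\mid Y}) = F_\alpha^{\text{S}}(p_X, q_{X_\alpha\mid Y})$, $F_\alpha^{\text{A}}(p_X,q_{X\mid Y}) = F_\alpha^{\text{S}}(p_{X_\alpha}, q_{X\mid Y})$, and $\tilde F_\alpha^{\text{A}}(p_X,q_{X\mid Y}) = F_\alpha^{\text{S}}(p_{X_\alpha}, q_{X_\alpha\mid Y})$, together with Proposition \ref{prop:properies_alpha_dist}(1), which gives $p_{(X_\alpha)_{1/\alpha}} = p_X$ and is what makes the change of variables reversible.

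For part 1) (equivalence of S1 and S2): the base case $k=0$ is the hypothesis $p_X^{(0),\text{S1}} = p_X^{(0),\text{S2}}$, from which $q_{X\mid Y}^{(0),\text{S1}} = q_{X_\alpha\mid Y}^{(0),\text{S2}}$ follows by comparing the $q$-update formulae in Table \ref{tab:update_formula}: the S1 update produces $q^{*}(x\mid y) \propto p_X(x) p_{Y\mid X}(y\mid x)^\alpha$, while the S2 update produces $\tilde q^{*}(x\mid y) \propto p_X(x)^{1/\alpha} p_{Y\mid X}(y\mid x)$, and raising the latter to the $\alpha$-th power and renormalizing (i.e. taking its $\alpha$-tilted version) yields exactly the former. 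For the inductive step, assume \eqref{eq:equivalence_S1_S2_dist} holds at step $k$; then the $p_X$-update for S2 depends on $q_{X\mid Y}^{(k),\text{S2}}$ only through $q_{X_\alpha\mid Y}^{(k),\text{S2}} = q_{X\mid Y}^{(k),\text{S1}}$ (visible directly from the formulae in Table \ref{tab:update_formula}, which are literally identical once this substitution is made), so $p_X^{(k+1),\text{S1}} = p_X^{(k+1),\text{S2}}$; then the $q$-update argument above (applied at step $k+1$) gives $q_{X\mid Y}^{(k+1),\text{S1}} = q_{X_\alpha\mid Y}^{(k+1),\text{S2}}$. The value equalities \eqref{eq:equivalence_S1_S2_value} then follow from $F_\alpha^{\text{S2}}(p_X, q_{X\mid Y}) = F_\alpha^{\text{S1}}(p_X, q_{X_\alpha\mid Y})$ evaluated at the matched iterates. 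Part 2) is identical in structure, with $F_\alpha^{\text{A1}}, F_\alpha^{\text{A2}}$ in place of $F_\alpha^{\text{S1}}, F_\alpha^{\text{S2}}$; alternatively, it follows from part 1) applied to the tilted channel via Remark \ref{remark:observation}.

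For parts 3) and 4) (linking the Sibson algorithms to the Arimoto algorithms): the key identity is $F_\alpha^{\text{A1}}(p_X, q_{X\mid Y}) = F_\alpha^{\text{S1}}(p_{X_\alpha}, q_{X\mid Y})$ from Remark \ref{remark:observation}. Under the hypothesis $p_X^{(0),\text{S1}} = p_{X_\alpha}^{(0),\text{A1}}$, an induction shows $p_X^{(k),\text{S1}} = p_{X_\alpha}^{(k),\text{A1}}$ and $q_{X\mid Y}^{(k),\text{S1}} = q_{X\mid Y}^{(k),\text{A1}}$ for all $k$: the $q$-updates agree because S1's $q$-update applied to $p_X^{(k),\text{S1}} = p_{X_\alpha}^{(k),\text{A1}}$ gives $q \propto p_{X_\alpha}^{(k),\text{A1}}(x) p_{Y\mid X}(y\mid x)^\alpha \propto p_X^{(k),\text{A1}}(x)^\alpha p_{Y\mid X}(y\mid x)^\alpha$, which is exactly A1's $q$-update; and the $p$-updates agree after applying the $\alpha$-tilt, since S1's $p$-update raises a certain quantity to the power $\alpha/(\alpha-1)$ while A1's raises the same quantity to $1/(\alpha-1)$, and these differ precisely by the $\alpha$-tilting operation (one is the $\alpha$-tilted distribution of the other, using $\left(\cdot\right)^{\alpha/(\alpha-1)} = \left(\left(\cdot\right)^{1/(\alpha-1)}\right)^\alpha$). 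The value equalities follow from the Remark \ref{remark:observation} identity. Part 4) is the same argument with $\tilde F_\alpha^{\text{S}}, \tilde F_\alpha^{\text{A}}$ and the identity $\tilde F_\alpha^{\text{A}}(p_X, q_{X\mid Y}) = \tilde F_\alpha^{\text{S}}(p_{X_\alpha}, q_{X\mid Y})$ (which follows from \eqref{eq:tilde_F_A} and Remark \ref{remark:observation}).

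The main obstacle is purely bookkeeping: keeping straight which distribution is tilted and which is not at each stage, and verifying that the exponents in the four $p_X$-update formulae in Table \ref{tab:update_formula} really do differ exactly by the factor that implements $\alpha$-tilting ($\alpha/(\alpha-1)$ versus $1/(\alpha-1)$, and the matching of the $q$-argument versus its $\alpha$-tilt). No genuinely hard inequality or limiting argument is needed — Hölder's inequality and the variational characterizations have already done that work in Theorems \ref{thm:novel_max_characterization} and \ref{thm:novel_update_formulae}; here everything reduces to substituting \eqref{eq:alpha_tilted_dist} into the closed-form update formulae and invoking Proposition \ref{prop:properies_alpha_dist}(1) to see the $\alpha$-tilt is invertible. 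One should also note at the end that parts 1)--4) are not independent: any three of the four equivalences imply the fourth by composition, so it suffices to prove, say, 1) and 3) directly.
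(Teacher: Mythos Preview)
Your proposal is correct and follows essentially the same approach as the paper's proof: induction on $k$, verifying the distribution identities by direct substitution into the closed-form update formulae of Table~\ref{tab:update_formula}, and deducing the value equalities from the identities $\tilde F_\alpha^{\text{S}}(p_X,q_{X\mid Y})=F_\alpha^{\text{S}}(p_X,q_{X_\alpha\mid Y})$ and $F_\alpha^{\text{A}}(p_X,q_{X\mid Y})=F_\alpha^{\text{S}}(p_{X_\alpha},q_{X\mid Y})$. The paper carries out only part~1) explicitly and defers the rest to ``similarly''; your sketch fills in more of parts~3)--4), but the method is the same.
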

\begin{proof}
See Appendix \ref{proof:equivalence_S1_S2_A1_A2}. 
\end{proof}

\begin{remark}
The above results show that 
\begin{itemize}
\item Algorithms S1 and S2 (\textit{resp.}, Algorithms A1 and A2) are equivalent if we select initial distributions such that $p_{X}^{(0), \text{S1}} = p_{X}^{(0), \text{S2}}$ (\textit{resp.}, $p_{X}^{(0), \text{A1}} = p_{X}^{(0), \text{A2}}$).
\item Algorithms S1 and A1 (\textit{resp.}, Algorithms S2 and A2) are equivalent if we select initial distributions such that $p_{X}^{(0), \text{S1}} = p_{X_{\alpha}}^{(0), \text{A1}}$ (\textit{resp.}, $p_{X}^{(0), \text{S2}} = p_{X_{\alpha}}^{(0), \text{A2}}$).
\end{itemize}

\end{remark}
Figures \ref{fig:S1_S2} and \ref{fig:S1_A1} visualize the statements described in \eqref{eq:equivalence_S1_S2_dist} and \eqref{eq:equivalence_S1_A1_dist}, respectively.

\begin{figure}[htbp]
\centering
\includegraphics[width=5cm, clip]{./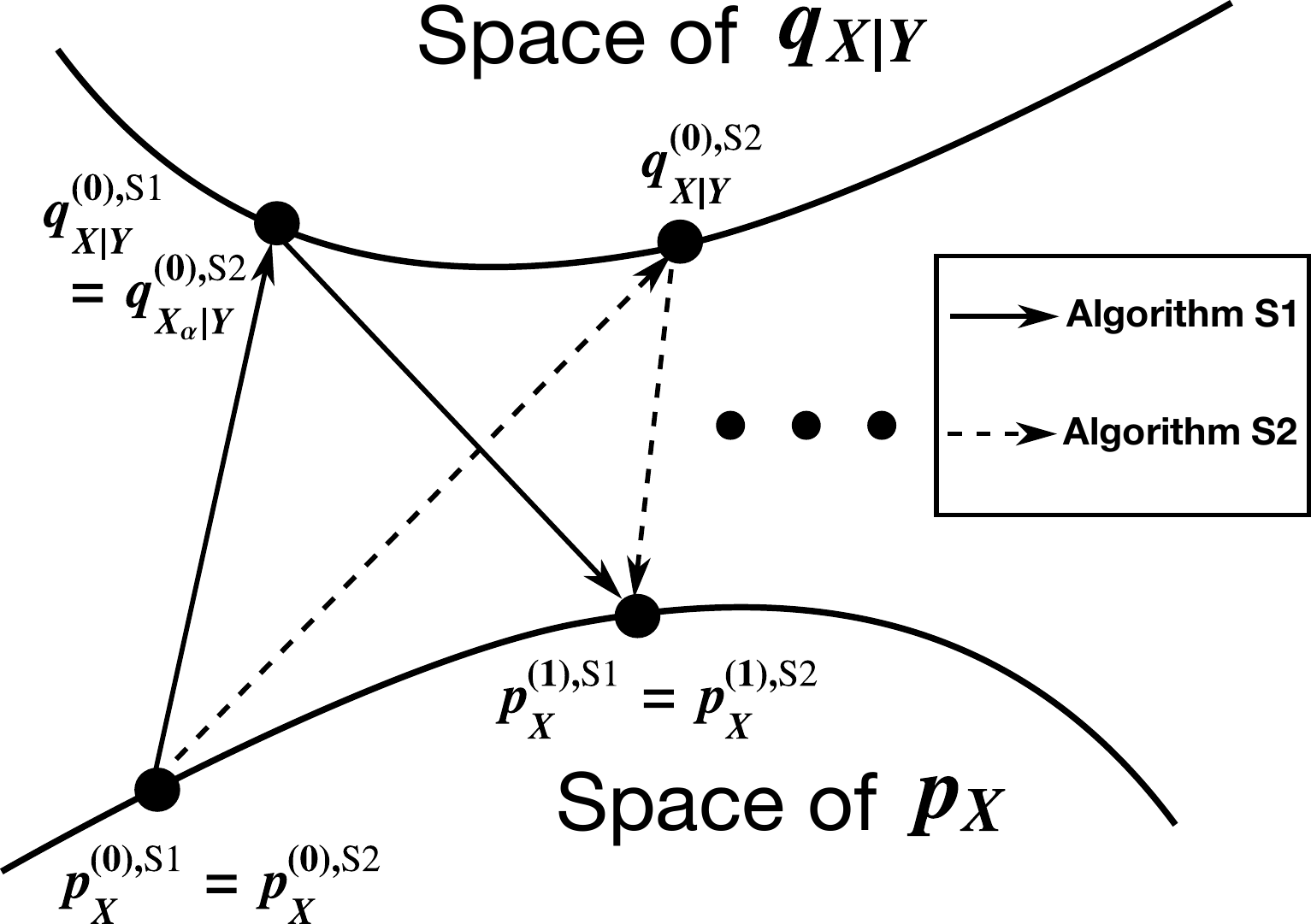}
\caption{Iterations of $p_{X}^{(k)}$ and $q_{X\mid Y}^{(k)}$ in Algorithm S1 (solid line) and Algorithm S2 (dashed line) when $p_{X}^{(0), \text{S1}} = p_{X}^{(0), \text{S2}}$.}
\label{fig:S1_S2}
\end{figure}

\begin{figure}[htbp]
\centering
\includegraphics[width=5cm, clip]{./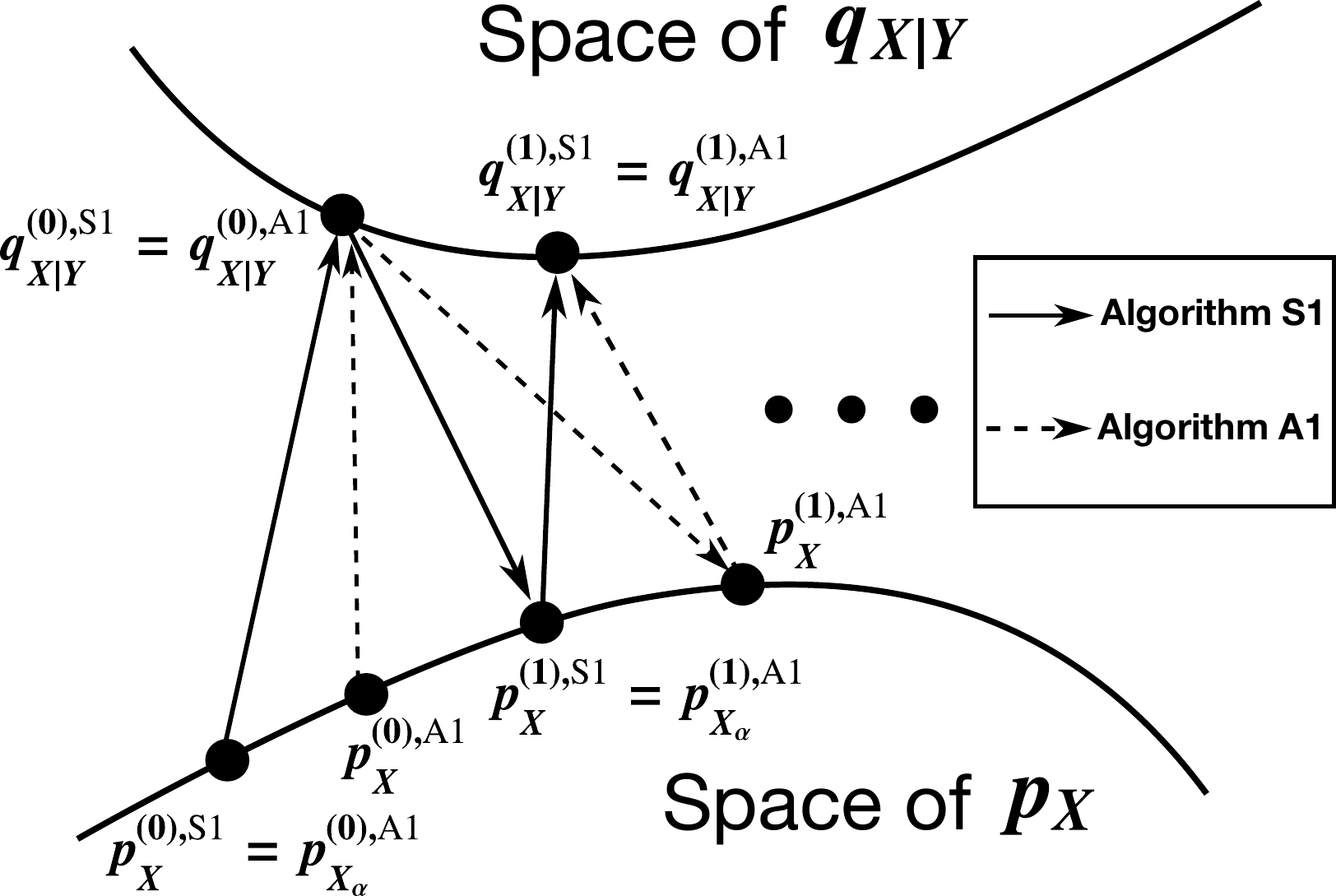}
\caption{Iterations of $p_{X}^{(k)}$ and $q_{X\mid Y}^{(k)}$ in Algorithm S1 (solid line) and Algorithm A1 (dashed line) when $p_{X}^{(0), \text{S1}} = p_{X_{\alpha}}^{(0), \text{A1}}$.}
\label{fig:S1_A1}
\end{figure}

The next corollary immediately follows from Proposition \ref{prop:properies_alpha_dist} 3) and Theorem \ref{thm:equivalence_S1_S2_A1_A2}. 
\begin{cor} \label{cor:equivalence}
Suppose that $p_{X}^{(0), \text{S1}} = p_{X}^{(0), \text{S2}} = p_{X_{\alpha}}^{(0), \text{A1}}=p_{X_{\alpha}}^{(0), \text{A2}}$. 
Then, the algorithms exhibit identical behavior. 
In particular, if $p_{X}^{(0), \text{S1}}, p_{X}^{(0), \text{S2}}, p_{X}^{(0), \text{A1}}$, and $p_{X}^{(0), \text{A2}}$ are all uniform distributions on $\mathcal{X}$, 
the condition above is satisfied.
\end{cor}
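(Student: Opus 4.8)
The plan is to observe that the single chain of equalities $p_{X}^{(0), \text{S1}} = p_{X}^{(0), \text{S2}} = p_{X_{\alpha}}^{(0), \text{A1}} = p_{X_{\alpha}}^{(0), \text{A2}}$ already supplies every hypothesis needed to invoke all four parts of Theorem \ref{thm:equivalence_S1_S2_A1_A2} at once. Three of the required conditions can be read off directly: $p_{X}^{(0), \text{S1}} = p_{X}^{(0), \text{S2}}$ (for part 1), $p_{X}^{(0), \text{S1}} = p_{X_{\alpha}}^{(0), \text{A1}}$ (for part 3), and $p_{X}^{(0), \text{S2}} = p_{X_{\alpha}}^{(0), \text{A2}}$ (for part 4). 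The only condition not immediately visible is $p_{X}^{(0), \text{A1}} = p_{X}^{(0), \text{A2}}$ (for part 2); I would obtain it from $p_{X_{\alpha}}^{(0), \text{A1}} = p_{X_{\alpha}}^{(0), \text{A2}}$ by applying the $(1/\alpha)$-tilting map to both sides and invoking Proposition \ref{prop:properies_alpha_dist} 1), which gives $p_{(X_{\alpha})_{1/\alpha}} = p_{X}$.

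With all four hypotheses in hand, parts 1)--4) of Theorem \ref{thm:equivalence_S1_S2_A1_A2} show that the distribution sequences of the four algorithms coincide up to the $\alpha$-tilting bookkeeping recorded in \eqref{eq:equivalence_S1_S2_dist}, \eqref{eq:equivalence_A1_A2_dist}, \eqref{eq:equivalence_S1_A1_dist}, and \eqref{eq:equivalence_S2_A2_dist}, and, combining the value identities in those four parts (transitively), that $F_{\alpha}^{(k, k), \text{S1}} = F_{\alpha}^{(k, k), \text{S2}} = F_{\alpha}^{(k, k), \text{A1}} = F_{\alpha}^{(k, k), \text{A2}}$, with the analogous chain for $F_{\alpha}^{(k+1, k), (\cdot)}$, for every $k$. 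Since the stopping rule and the returned value of each algorithm depend only on these objective-value sequences, the four algorithms terminate at the same iteration and output the same number; this is the asserted identical behavior.

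For the final sentence I would note that the uniform distribution $u_{X}(x) = 1/\abs{\mathcal{X}}$ has full support, so by the "if" direction of Proposition \ref{prop:properies_alpha_dist} 3) it satisfies $u_{X} = u_{X_{\alpha}}$; hence if every initial distribution equals $u_{X}$ then $p_{X}^{(0), \text{S1}} = p_{X}^{(0), \text{S2}} = u_{X} = u_{X_{\alpha}} = p_{X_{\alpha}}^{(0), \text{A1}} = p_{X_{\alpha}}^{(0), \text{A2}}$, so the hypothesis of the corollary holds. I do not expect a genuine obstacle here, as the statement is essentially bookkeeping over Theorem \ref{thm:equivalence_S1_S2_A1_A2} and Proposition \ref{prop:properies_alpha_dist}; the only point deserving care is the translation of the tilted-initialization hypothesis $p_{X_{\alpha}}^{(0), \text{A1}} = p_{X_{\alpha}}^{(0), \text{A2}}$ back to the untilted condition via part 1) of Proposition \ref{prop:properies_alpha_dist}, together with making explicit that "identical behavior" is to be read through the common objective-value sequences that drive the termination criterion.
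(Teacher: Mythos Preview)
Your proposal is correct and follows essentially the same route the paper indicates: the corollary is stated to follow immediately from Proposition~\ref{prop:properies_alpha_dist}~3) and Theorem~\ref{thm:equivalence_S1_S2_A1_A2}, and your argument is a fleshed-out version of exactly that. The one minor deviation is that you additionally invoke Proposition~\ref{prop:properies_alpha_dist}~1) to recover $p_{X}^{(0), \text{A1}} = p_{X}^{(0), \text{A2}}$ and thereby activate part~2) of Theorem~\ref{thm:equivalence_S1_S2_A1_A2}; the paper does not cite part~1), presumably because parts~1),~3),~4) of the theorem already link all four algorithms by transitivity, making part~2) redundant. Either bookkeeping works.
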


Since Algorithm S1 exhibits a global convergence property \cite[Thm 3]{1055640}, it immediately follows from Theorem \ref{thm:equivalence_S1_S2_A1_A2} that the other algorithms also exhibits this property. 
\begin{cor} \label{cor:global_convergence}
For any initial distribution $p_{X}^{(0), (\cdot)}$, algorithms S1, S2, A1, and A2  converge to the global optimum as $k \to \infty$.
\end{cor}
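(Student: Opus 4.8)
The plan is to obtain the global convergence of Algorithms S2, A1, and A2 from that of Algorithm S1 by invoking the iterate-by-iterate equivalence established in Theorem \ref{thm:equivalence_S1_S2_A1_A2}. Recall that \cite[Thm 3]{1055640} guarantees that, for any initial distribution $p_{X}^{(0),\text{S1}}$, the value sequence $\{F_{\alpha}^{(k,k),\text{S1}}\}_{k=0}^{\infty}$ of Algorithm S1 converges to $\max_{p_{X},q_{X\mid Y}}F_{\alpha}^{\text{S}}(p_{X},q_{X\mid Y}) = C_{\alpha}^{\text{S}}$ as $k\to\infty$, and recall that $C_{\alpha}^{\text{S}} = C_{\alpha}^{\text{A}}$ (\cite[Lemma 1]{arimoto1977}). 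So it is enough to check that, whenever we are handed an arbitrary initial distribution for one of the other algorithms, we can choose a matching initial distribution for S1 (or for A1), possibly after one application of $\alpha$-tilting, so that Theorem \ref{thm:equivalence_S1_S2_A1_A2} forces the two objective-value sequences to agree for all $k$.

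First, for Algorithm S2: given $p_{X}^{(0),\text{S2}}\in\Delta_{\mathcal{X}}$, put $p_{X}^{(0),\text{S1}} := p_{X}^{(0),\text{S2}}$; by Theorem \ref{thm:equivalence_S1_S2_A1_A2} 1), $F_{\alpha}^{(k,k),\text{S2}} = F_{\alpha}^{(k,k),\text{S1}}$ for every $k\in\bn$, hence $F_{\alpha}^{(k,k),\text{S2}}\to C_{\alpha}^{\text{S}}$. For Algorithm A1: given $p_{X}^{(0),\text{A1}}\in\Delta_{\mathcal{X}}$, put $p_{X}^{(0),\text{S1}} := p_{X_{\alpha}}^{(0),\text{A1}}\in\Delta_{\mathcal{X}}$; by Theorem \ref{thm:equivalence_S1_S2_A1_A2} 3), $F_{\alpha}^{(k,k),\text{A1}} = F_{\alpha}^{(k,k),\text{S1}}$ for every $k$, hence $F_{\alpha}^{(k,k),\text{A1}}\to C_{\alpha}^{\text{S}} = C_{\alpha}^{\text{A}}$. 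Finally, for Algorithm A2: given $p_{X}^{(0),\text{A2}}\in\Delta_{\mathcal{X}}$, put $p_{X}^{(0),\text{A1}} := p_{X}^{(0),\text{A2}}$; by Theorem \ref{thm:equivalence_S1_S2_A1_A2} 2), $F_{\alpha}^{(k,k),\text{A2}} = F_{\alpha}^{(k,k),\text{A1}}$ for every $k$, and since the previous step shows that Algorithm A1 converges to $C_{\alpha}^{\text{A}}$ from any initial distribution, so does Algorithm A2. In each case the limit is the maximum of the corresponding objective function, i.e., the asserted global optimum.

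The only delicate point is that the hypotheses of Theorem \ref{thm:equivalence_S1_S2_A1_A2} couple the initial distributions of the algorithms being compared, occasionally through the $\alpha$-tilting map, so one must verify that the required matching initializer always exists for an arbitrary starting distribution. This is immediate from Proposition \ref{prop:properies_alpha_dist} 1): the map $p_{X}\mapsto p_{X_{\alpha}}$ is a bijection of $\Delta_{\mathcal{X}}$ onto itself with inverse $p_{X}\mapsto p_{X_{1/\alpha}}$, so $p_{X_{\alpha}}^{(0),\text{A1}}$ ranges over all of $\Delta_{\mathcal{X}}$ as $p_{X}^{(0),\text{A1}}$ does, and likewise for A2. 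Apart from this bookkeeping I foresee no real obstacle, since the substantive work — the convergence of S1 in \cite{1055640} and the iterate-by-iterate equivalence in Theorem \ref{thm:equivalence_S1_S2_A1_A2} — is already in hand.
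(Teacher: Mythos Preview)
Your proposal is correct and follows essentially the same approach as the paper: the paper simply states that since Algorithm S1 has the global convergence property by \cite[Thm~3]{1055640}, Theorem~\ref{thm:equivalence_S1_S2_A1_A2} immediately yields the same conclusion for S2, A1, and A2. Your write-up fleshes out these ``immediate'' steps explicitly (matching initial distributions and invoking the appropriate parts of Theorem~\ref{thm:equivalence_S1_S2_A1_A2}); the bijection remark in your last paragraph is harmless but not actually needed, since in the direction you argue (given $p_{X}^{(0),\text{A1}}$, define $p_{X}^{(0),\text{S1}}:=p_{X_{\alpha}}^{(0),\text{A1}}$) the tilted distribution is automatically a valid element of $\Delta_{\mathcal{X}}$.
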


\subsection{Numerical Example}
In this subsection, we provide a numerical example to demonstrate Corollary \ref{cor:equivalence}. 
Let $\alpha=1.5$ and $\mathcal{X} = \mathcal{Y} = \{1, 2, 3, 4, 5\}$. 
Consider the following channel $p_{Y\mid X}$ (cited from \cite[Example 1]{nakagawa2022proof}):
\begin{align}
p_{Y\mid X} &= 
\begin{pmatrix}
0.600 & 0.100 & 0.100 & 0.100 & 0.100 \\
0.100 & 0.600 & 0.100 & 0.100 & 0.100 \\
0.231 & 0.231 & 0.066 & 0.179 & 0.292 \\
0.161 & 0.341 & 0.226 & 0.226 & 0.046 \\
0.341 & 0.161 & 0.226 & 0.046 & 0.226 \\
\end{pmatrix}, 
\end{align}
where $(i, j)$-element of channel matrix $p_{Y\mid X}$ corresponds to $p_{Y\mid X}(j | i)$. 
The initial distribution $p_{X}^{(0), (\cdot)}$ of each algorithm is a uniform distribution on $\mathcal{X}$.
Table \ref{tab:exp_f} shows the values of $F_{\alpha}^{(k, k), (\cdot)} := F_{\alpha}^{(\cdot)}(p_{X}^{(k)}, q_{X\mid Y}^{(k)})$ of each algorithm demonstrating that Corollary \ref{cor:equivalence} numerically holds.
All algorithms stopped at $k=142$, where we selected $\epsilon = 1.0\times 10^{-7}$. 

\begin{table}[htbp]
  \caption{Transition of $F^{(k, k),(\cdot)}_{\alpha}$}
  \label{tab:exp_f}
  \centering
  \resizebox{.5\textwidth}{!}{
  \begin{tabular}{@{} cccccccc @{}}
    \toprule
    $k$ & $0$  & $10$ & $50$ & $90$ & $130$ & $142$ \\ 
    \midrule
    $F_{\alpha}^{(k, k) ,\text{S1}}$ & $0.23906$ & $0.26389$ & $0.26554$ & $0.26558$ & $0.26559$ & $0.26559$ \\ 
    $F_{\alpha}^{(k, k) ,\text{S2}}$ & $0.23906$ & $0.26389$ & $0.26554$ & $0.26558$ & $0.26559$ & $0.26559$ \\ 
    $F_{\alpha}^{(k, k) ,\text{A1}}$ & $0.23906$ & $0.26389$ & $0.26554$ & $0.26558$ & $0.26559$ & $0.26559$ \\ 
    $F_{\alpha}^{(k, k) ,\text{A2}}$ & $0.23906$ & $0.26389$ & $0.26554$ & $0.26558$ & $0.26559$ & $0.26559$ \\ 
    \bottomrule
  \end{tabular}
  }
\end{table}

The obtained optimal distributions are as follows:
\begin{align}
    p_{X}^{(142),\text{S1}} &=  p_{X}^{(142),\text{S2}} \approx (0.361,  0.351,  0.115,  0.121, 0.0518), \label{eq:exp_px_s1} \\
    p_{X}^{(142),\text{A1}} &= p_{X}^{(142),\text{A2}} \approx ( 0.419,  0.401,  0.075, 0.082, 0.023) \\ 
    &\approx p_{X_{\alpha}}^{(142),\text{S1}} =  p_{X_{\alpha}}^{(142),\text{S2}}. 
\end{align}

\section{Conclusion}\label{sec:conclusion}
In this paper, we proposed novel algorithms for computing Sibson capacity $C_{\alpha}^{\text{S}}$ and 
Arimoto capacity $C_{\alpha}^{\text{A}}$ by employing the $\alpha$-tilted distribution and H\"{o}lder's inequality. 
Furthermore, we proved the equivalence of the proposed algorithms (Algorithms S2 and A2) with previous algorithms (Algorithms S1 and A1) by selecting appropriate initial distributions for these algorithms.
Using the equivalence result, we also proved that all algorithms exhibit the global convergence property. 
In a future study, we will derive an algorithm for calculating Augustin--Csisz\`{a}r capacity $C_{\alpha}^{\text{C}}$ directly. 


\appendices

\section{Proof of Proposition \ref{prop:properies_alpha_dist}} \label{proof:properies_alpha_dist}
\begin{proof}
\noindent
1)
\begin{align}
p_{\left( X_{\alpha} \right)_{\beta}}(x) &:= \frac{p_{X_{\alpha}}(x)^{\beta}}{\sum_{x}p_{X_{\alpha}}(x)^{\beta}} = \frac{\left( \frac{p_{X}(x)^{\alpha}}{\sum_{x}p_{X}(x)^{\alpha}} \right)^{\beta}}{\sum_{x}\left( \frac{p_{X}(x)^{\alpha}}{\sum_{x}p_{X}(x)^{\alpha}} \right)^{\beta}} \\
&= \frac{p_{X}(x)^{\alpha\beta}}{\sum_{x}p_{X}(x)^{\alpha\beta}} = p_{X_{\alpha\beta}}(x).
\end{align}
\noindent
2) Let $p_{X}^{*}$ and $p_{X}^{*, \alpha}$ be optimal distributions that maximize $\mathcal{F}(p_{X})$ and $\mathcal{F}(p_{X_{\alpha}})$, respectively.
Let $p_{X_{{1}/{\alpha}}}^{*}(x) := \frac{p_{X}^{*}(x)^{1/\alpha}}{\sum_{x}p_{X}^{*}(x)^{1/\alpha}}$ be the $\frac{1}{\alpha}$-tilted distribution of $p_{X}^{*}$. 
From the definition and the Proposition 1) it follows that 
\begin{align}
&\max_{p_{X}}\mathcal{F}(p_{X_{\alpha}}) = \mathcal{F}(p_{X}^{*, \alpha}) \leq \max_{p_{X}} \mathcal{F}(p_{X}) \notag \\
&= \mathcal{F}(p_{X}^{*}) = \mathcal{F}(p_{(X_{{1}/{\alpha}})_{\alpha}}^{*}) \leq \max_{p_{X}} \mathcal{F}(p_{X_{\alpha}}).
\end{align}

\noindent
3) $(\Leftarrow)$: Assume that $p_{X}(x) = 1/\abs{\mathcal{X}}, x\in \mathcal{X}$. Then 
$p_{X_{\alpha}}(x) = \frac{1/\abs{\mathcal{X}}^{\alpha}}{\sum_{x}1/\abs{\mathcal{X}}^{\alpha}} = 1/\abs{\mathcal{X}}$. 

\noindent
$(\Rightarrow)$:
Assume that $p_{X}(x) = \frac{p_{X}(x)^{\alpha}}{\sum_{x}p_{X}(x)^{\alpha}}, x\in \mathcal{X}$. This implies $p_{X}(x) = \left( \sum_{x}p_{X}(x)^{\alpha} \right)^{1/(\alpha-1)}$. 
Summing over $x\in \mathcal{X}$, we have $1=\abs{\mathcal{X}} \left( \sum_{x}p_{X}(x)^{\alpha} \right)^{1/(\alpha-1)}$. 
Therefore,  $p_{X}(x) = 1/\abs{\mathcal{X}}, x\in \mathcal{X}$. 
\end{proof}

\section{Proof of Theorem \ref{thm:novel_max_characterization}} \label{proof:novel_max_characterization}
We first review H\"{o}lder's inequality and its equality condition.
\begin{lemma}[H\"{o}lder's inequality \cite{Holder1889}] \label{lemma:Holder}
For $p>0$ and $a_{i}\geq 0, b_{i}\geq 0, i=1,\dots, n$, 
\begin{align}
\begin{cases}
\sum_{i=1}^{n} a_{i}b_{i} \leq \left( \sum_{i=1}^{n} a_{i}^{p} \right)^{1/p} \left( \sum_{i=1}^{n}b_{i}^{\frac{p}{p-1}} \right)^{1-1/p}, \quad p > 1 \\
        \sum_{i=1}^{n} a_{i}b_{i} \geq \left( \sum_{i=1}^{n} a_{i}^{p} \right)^{1/p} \left( \sum_{i=1}^{n}b_{i}^{\frac{p}{p-1}} \right)^{1-1/p}, \quad 0 < p < 1, 
\end{cases} \label{eq:Holder_ineq}
\end{align}
where the equality holds if and only if there exists a constant $c$ such that for all $i=1,\dots, n$, 
$a_{i}^{p} = cb_{i}^{\frac{p}{p-1}}$.
\end{lemma}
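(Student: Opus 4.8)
The plan is to derive both inequalities, together with their common equality condition, from the scalar Young inequality. Writing $q := \frac{p}{p-1}$ (the H\"older conjugate of $p$, so that $\frac1p+\frac1q = 1$, with $q>1$ when $p>1$ and $q<0$ when $0<p<1$), the first step is the following: for $p>1$ and reals $u,v>0$, strict convexity of the exponential gives $uv = \exp\!\big(\frac1p\log u^{p} + \frac1q\log v^{q}\big) \le \frac1p u^{p} + \frac1q v^{q}$, with equality exactly when $u^{p}=v^{q}$; the cases $u=0$ or $v=0$ are immediate. This is the only analytic input; everything else is bookkeeping.

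\textbf{Case $p>1$.} Put $A := \big(\sum_i a_i^{p}\big)^{1/p}$ and $B := \big(\sum_i b_i^{q}\big)^{1/q}$, noting that $1-\frac1p = \frac1q$, so that $B$ is exactly the second factor on the right-hand side. If $A=0$ or $B=0$ the bound is trivial (all the $a_i$, resp.\ all the $b_i$, vanish); otherwise apply Young with $u=a_i/A$, $v=b_i/B$ and sum over $i$ to get $\frac{1}{AB}\sum_i a_i b_i \le \frac1p\frac{\sum_i a_i^{p}}{A^{p}} + \frac1q\frac{\sum_i b_i^{q}}{B^{q}} = \frac1p+\frac1q = 1$, which is the claimed inequality. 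Equality forces $(a_i/A)^{p}=(b_i/B)^{q}$ for every $i$, that is $a_i^{p}=c\,b_i^{q}=c\,b_i^{p/(p-1)}$ with $c=A^{p}/B^{q}$; conversely, this proportionality turns every instance of Young into an equality.

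\textbf{Case $0<p<1$.} Here I reduce to the previous case by a standard substitution. Assume $b_i>0$ for all $i$ (otherwise the right-hand side is $+\infty$ and the statement is vacuous) and set $r := \frac1p>1$. Apply the already-proved inequality with exponent $r$ to the sequences $\big((a_i b_i)^{p}\big)_i$ and $\big(b_i^{-p}\big)_i$: since the conjugate of $r$ is $\frac{1}{1-p}$ and $-\frac{p}{1-p}=\frac{p}{p-1}$, this gives $\sum_i a_i^{p} = \sum_i (a_i b_i)^{p} b_i^{-p} \le \big(\sum_i a_i b_i\big)^{p}\big(\sum_i b_i^{p/(p-1)}\big)^{1-p}$. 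Raising to the power $\frac1p$ and multiplying through by the positive factor $\big(\sum_i b_i^{p/(p-1)}\big)^{1-1/p}$ yields exactly $\big(\sum_i a_i^{p}\big)^{1/p}\big(\sum_i b_i^{p/(p-1)}\big)^{1-1/p}\le \sum_i a_i b_i$. The equality condition inherited from the forward case, applied to these two sequences, simplifies after clearing exponents once more to $a_i^{p}=c\,b_i^{p/(p-1)}$ for all $i$.

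\textbf{Main obstacle.} The forward case is entirely routine. The only place needing care is the $0<p<1$ case: one must choose the pair of sequences and the conjugate exponent so that the inequality \emph{reverses}, and then push the equality characterization through the substitution and through raising to the negative power $1-\frac1p$ without sign errors. I would also pin down at the outset the conventions on zero entries (and note that $p\le 0$ is excluded) so that the degenerate cases are dispatched cleanly rather than glossed over.
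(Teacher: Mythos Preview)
Your proof is correct and follows the standard route via Young's inequality for $p>1$ and the usual substitution trick to handle the reverse inequality for $0<p<1$. There is nothing to compare against, however: the paper does not supply its own proof of this lemma but simply cites it as a classical result from \cite{Holder1889} and then applies it in the proofs of Theorems~\ref{thm:novel_max_characterization} and~\ref{thm:novel_update_formulae}.
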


Using the H\"{o}lder's inequality, we prove Theorem \ref{thm:novel_max_characterization} as follows.
\begin{proof}
We only prove \eqref{eq:novel_max_characterization_Sibson_MI}. 
\eqref{eq:novel_max_characterization_Arimoto_MI} can be proved similarly. Additionally, we can provide alternative proof for Proposition \ref{prop:max_characterization_Sibson_Arimoto_MI}.

For a fixed $p_{X}$, we obtain 
\begin{align}
&\tilde{F}_{\alpha}^{\text{S}}(p_{X}, q_{X\mid Y}) \notag \\
&=\frac{\alpha}{\alpha-1}\log \sum_{y}\sum_{x}p_{X}(x)^{\frac{1}{\alpha}}p_{Y\mid X}(y\mid x)q_{X_{\alpha}\mid Y}(x\mid y)^{\frac{\alpha-1}{\alpha}} \\
&\overset{(a)}{\leq} \frac{\alpha}{\alpha-1} \log \sum_{y} \left( \sum_{x} p_{X}(x)p_{Y\mid X}(y\mid x)^{\alpha} \right)^{\frac{1}{\alpha}} \notag \\
&\qquad \qquad \qquad \qquad \qquad \times \left(\underbrace{\sum_{x} q_{X_{\alpha}\mid Y}(x\mid y)^{\frac{\alpha-1}{\alpha}\cdot \frac{\alpha}{\alpha-1}}}_{=1} \right)^{1-\frac{1}{\alpha}} \\  \label{eq:ineq}
&=\frac{\alpha}{\alpha-1} \log \sum_{y}\left( \sum_{x}p_{X}(x)p_{Y\mid X}(y\mid x)^{\alpha} \right)^{\frac{1}{\alpha}}
= I_{\alpha}^{\text{S}}(X; Y), 
\end{align}
where $(a)$ follows from H\"{o}lder's inequality\footnote{Note that for $0<\alpha<1$, $\frac{\alpha}{\alpha-1}<0$.} applied for each $y\in \mathcal{Y}$. 
The equality holds if for each $y\in \mathcal{Y}$, there exists a constant $c_{y}$ such that for  all $x\in \mathcal{X}, p_{X}(x)p_{Y\mid X}(y | x)^{\alpha} = c_{y} q_{X_{\alpha}\mid Y}(x | y)$. 
Solving this with respect to $q_{X\mid Y}(x | y)$ and using $1=\sum_{x}q_{X\mid Y}(x | y)$, we obtain $q_{X\mid Y}(x | y) = \frac{p_{X}(x)^{\frac{1}{\alpha}}p_{Y\mid X}(y | x)}{\sum_{y}p_{X}(x)^{\frac{1}{\alpha}}p_{Y\mid X}(y | x)}$. 
\end{proof}

\section{Proof of Theorem \ref{thm:novel_update_formulae}} \label{proof:novel_update_formulae}
\begin{proof}
$1)$ and $3)$ follow immediately from Theorem \ref{thm:novel_max_characterization}. 
We only prove $2)$ here. $4)$ can be proved similarly. 

For a fixed $q_{X\mid Y}$, we obtain 
\begin{align}
&\tilde{F}_{\alpha}^{\text{S}}(p_{X}, q_{X\mid Y}) \notag \\ 
&= \frac{\alpha}{\alpha-1}\sum_{x}p_{X}(x)^{\frac{1}{\alpha}}\sum_{y}p_{Y\mid X}(y\mid x)q_{X_{\alpha}\mid Y}(x\mid y)^{\frac{\alpha-1}{\alpha}} \\ 
&\overset{(a)}{\leq} \frac{\alpha}{\alpha-1}\log \left( \underbrace{\sum_{x}p_{X}(x)^{\frac{1}{\alpha}\cdot \alpha}}_{=1} \right)^{\frac{1}{\alpha}} \notag \\
&\qquad \times \left( \sum_{x} \left( \sum_{y}p_{Y\mid X}(y\mid x)q_{X_{\alpha}\mid Y}(x\mid y)^{\frac{\alpha-1}{\alpha}} \right)^{\frac{\alpha}{\alpha-1}} \right)^{1-\frac{1}{\alpha}} \\ 
&= \log \sum_{x} \left( \sum_{y}p_{Y\mid X}(y\mid x)q_{X_{\alpha}\mid Y}(x\mid y)^{\frac{\alpha-1}{\alpha}} \right)^{\frac{\alpha}{\alpha-1}}, 
\end{align}
where $(a)$ follows from  H\"{o}lder's inequality. 
The equality holds if there exists a constant $c$ such that for all $x\in \mathcal{X}, p_{X}(x) = c\left(\sum_{y}p_{Y\mid X}\left(y\mid x\right)q_{X_{\alpha}\mid Y}(x\mid y)^{\frac{\alpha-1}{\alpha}}\right)^{\frac{\alpha}{\alpha-1}}$. 
Solving this with respect to $p_{X}$ and using $1=\sum_{x}q_{X\mid Y}(x | y)$, we obtain $p_{X}(x) = \frac{\left( \sum_{y}p_{Y\mid X}(y | x) q_{X_{\alpha}\mid Y}(x | y)^{\frac{\alpha-1}{\alpha}} \right)^{\frac{\alpha}{\alpha-1}}}{\sum_{x}\left( \sum_{y}p_{Y\mid X}(y | x) q_{X_{\alpha}\mid Y}(x | y)^{\frac{\alpha-1}{\alpha}} \right)^{\frac{\alpha}{\alpha-1}}}$. 
\end{proof}

\section{Proof of Theorem \ref{thm:equivalence_S1_S2_A1_A2}} \label{proof:equivalence_S1_S2_A1_A2}
\begin{proof}
We will only prove \eqref{eq:equivalence_S1_S2_dist} and \eqref{eq:equivalence_S1_S2_value}. 
\eqref{eq:equivalence_A1_A2_dist}, \eqref{eq:equivalence_A1_A2_value}, \eqref{eq:equivalence_S1_A1_dist}, \eqref{eq:equivalence_S1_A1_value}, \eqref{eq:equivalence_S2_A2_dist}, and \eqref{eq:equivalence_S2_A2_value} can be proved similarly.

First, we will prove \eqref{eq:equivalence_S1_S2_dist} by induction on $k$. 
When $k=0$, from the assumption and the updating formula in Table \ref{tab:update_formula}, 
it immediately follows that $p_{X}^{(0), \text{S1}} = p_{X}^{(0), \text{S2}}(=: p_{X}^{(0)})$ and 
\begin{align}
q_{X\mid Y}^{(0), \text{S1}}(x\mid y) &= \frac{p_{X}^{(0)}(x)p_{Y\mid X}(y\mid x)^{\alpha}}{\sum_{x}p_{X}^{(0)}(x)p_{Y\mid X}(y\mid x)^{\alpha}}  \\ 
&= \frac{\left( \frac{p_{X}^{(0)}(x)^{\frac{1}{\alpha}}p_{Y\mid X}(y\mid x)}{\sum_{x}p_{X}^{(0)}(x)^{\frac{1}{\alpha}}p_{Y\mid X}(y\mid x)} \right)^{\alpha}}{\sum_{x}\left( \frac{p_{X}^{(0)}(x)^{\frac{1}{\alpha}}p_{Y\mid X}(y\mid x)}{\sum_{x}p_{X}^{(0)}(x)^{\frac{1}{\alpha}}p_{Y\mid X}(y\mid x)} \right)^{\alpha}} \\ 
&= \frac{q_{X\mid Y}^{(0), \text{S2}}(x\mid y)^{\alpha}}{\sum_{x}q_{X\mid Y}^{(0), \text{S2}}(x\mid y)^{\alpha}} = q_{X_{\alpha}\mid Y}^{(0), \text{S2}}(x\mid y). \label{eq:initial_induction}
\end{align}
Suppose that \eqref{eq:equivalence_S1_S2_dist} and \eqref{eq:equivalence_S1_S2_value} hold for $k$. Then, using the updating formulae in Table \ref{tab:update_formula}, we obtain  
\begin{align}
&p_{X}^{(k+1), \text{S1}}(x) 
= \frac{\left( \sum_{y}p_{Y\mid X}(y\mid x)q^{(k), \text{S1}}_{X\mid Y}(x\mid y)^{\frac{\alpha-1}{\alpha}} \right)^{\frac{\alpha}{\alpha-1}}}{\sum_{x} \left( \sum_{y}p_{Y\mid X}(y\mid x)q^{(k), \text{S1}}_{X\mid Y}(x\mid y)^{\frac{\alpha-1}{\alpha}} \right)^{\frac{\alpha}{\alpha-1}}} \\ 
&= \frac{\left( \sum_{y}p_{Y\mid X}(y\mid x)q^{(k), \text{S2}}_{X_{\alpha}\mid Y}(x\mid y)^{\frac{\alpha-1}{\alpha}} \right)^{\frac{\alpha}{\alpha-1}}}{\sum_{x} \left( \sum_{y}p_{Y\mid X}(y\mid x)q^{(k), \text{S2}}_{X_{\alpha}\mid Y}(x\mid y)^{\frac{\alpha-1}{\alpha}} \right)^{\frac{\alpha}{\alpha-1}}}
= p_{X}^{(k+1), \text{S2}}(x).
\end{align}
Equality $q_{X\mid Y}^{(k+1), \text{S1}}(x|y) = q_{X_{\alpha}\mid Y}^{(k+1), \text{S2}}(x|y)$ can be shown in a similar way as \eqref{eq:initial_induction}. 

Next, we prove \eqref{eq:equivalence_S1_S2_value}. 
\begin{align}
F_{\alpha}^{(k, k), \text{S1}} &= F_{\alpha}^{\text{S}}(p_{X}^{(k), \text{S1}}, q_{X\mid Y}^{(k), \text{S1}}) \\ 
&\overset{(a)}{=} F_{\alpha}^{\text{S}}(p_{X}^{(k), \text{S2}}, q_{X_{\alpha}\mid Y}^{(k), \text{S2}}) \\ 
&\overset{(b)}{=} \tilde{F}_{\alpha}^{\text{S}}(p_{X}^{(k), \text{S2}}, q_{X\mid Y}^{(k), \text{S2}}) = F_{\alpha}^{(k, k), \text{S2}}, 
\end{align}
where 
\begin{itemize}
\item $(a)$ follows from \eqref{eq:equivalence_S1_S2_dist}, 
\item $(b)$ follows from \eqref{eq:tilde_F_S}. 
\end{itemize}
Similarly, we can show that $F_{\alpha}^{(k+1, k), \text{S1}} = F_{\alpha}^{(k+1, k), \text{S2}}$. 
\end{proof}


\begin{thebibliography}{10}
\providecommand{\url}[1]{#1}
\csname url@samestyle\endcsname
\providecommand{\newblock}{\relax}
\providecommand{\bibinfo}[2]{#2}
\providecommand{\BIBentrySTDinterwordspacing}{\spaceskip=0pt\relax}
\providecommand{\BIBentryALTinterwordstretchfactor}{4}
\providecommand{\BIBentryALTinterwordspacing}{\spaceskip=\fontdimen2\font plus
\BIBentryALTinterwordstretchfactor\fontdimen3\font minus
  \fontdimen4\font\relax}
\providecommand{\BIBforeignlanguage}[2]{{%
\expandafter\ifx\csname l@#1\endcsname\relax
\typeout{** WARNING: IEEEtran.bst: No hyphenation pattern has been}%
\typeout{** loaded for the language `#1'. Using the pattern for}%
\typeout{** the default language instead.}%
\else
\language=\csname l@#1\endcsname
\fi
#2}}
\providecommand{\BIBdecl}{\relax}
\BIBdecl

\bibitem{shannon}
C.~E. Shannon, ``A mathematical theory of communication,'' \emph{The Bell
  System Technical Journal}, vol.~27, pp. 379--423, 1948.

\bibitem{7308959}
S.~{Verd{\'u}}, ``$\alpha$-mutual information,'' in \emph{2015 Information
  Theory and Applications Workshop (ITA)}, 2015, pp. 1--6.

\bibitem{Sibson1969InformationR}
R.~Sibson, ``Information radius,'' \emph{Zeitschrift f{\"u}r
  Wahrscheinlichkeitstheorie und Verwandte Gebiete}, vol.~14, pp. 149--160,
  1969.

\bibitem{arimoto1977}
S.~{Arimoto}, ``Information measures and capacity of order $\alpha$ for
  discrete memoryless channels,'' in \emph{2nd Colloquium, Keszthely, Hungary,
  1975}, I.~Csiszar and P.~Elias, Eds., vol.~16.\hskip 1em plus 0.5em minus
  0.4em\relax Amsterdam, Netherlands: North Holland: Colloquia Mathematica
  Societatis Jano's Bolyai, 1977, pp. 41--52.

\bibitem{370121}
I.~{Csisz\'{a}r}, ``Generalized cutoff rates and renyi's information
  measures,'' \emph{IEEE Transactions on Information Theory}, vol.~41, no.~1,
  pp. 26--34, 1995.

\bibitem{augusting_phd_thesis}
U.~Augustin, ``Noisy channels,'' Ph.D. dissertation, Habilitation thesis,
  Universit{\"a} Erlangen-N{\"u}rnberg, 1978.

\bibitem{8804205}
J.~{Liao}, O.~{Kosut}, L.~{Sankar}, and F.~{du Pin Calmon}, ``Tunable measures
  for information leakage and applications to privacy-utility tradeoffs,''
  \emph{IEEE Transactions on Information Theory}, vol.~65, no.~12, pp.
  8043--8066, 2019.

\bibitem{1054753}
S.~{Arimoto}, ``An algorithm for computing the capacity of arbitrary discrete
  memoryless channels,'' \emph{IEEE Transactions on Information Theory},
  vol.~18, no.~1, pp. 14--20, 1972.

\bibitem{1054855}
R.~{Blahut}, ``Computation of channel capacity and rate-distortion functions,''
  \emph{IEEE Transactions on Information Theory}, vol.~18, no.~4, pp. 460--473,
  1972.

\bibitem{1405276}
G.~Matz and P.~Duhamel, ``Information geometric formulation and interpretation
  of accelerated blahut-arimoto-type algorithms,'' in \emph{Information Theory
  Workshop}, 2004, pp. 66--70.

\bibitem{5484972}
Y.~Yu, ``Squeezing the arimoto--blahut algorithm for faster convergence,''
  \emph{IEEE Transactions on Information Theory}, vol.~56, no.~7, pp.
  3149--3157, 2010.

\bibitem{7035101}
T.~Sutter, D.~Sutter, P.~M. Esfahani, and J.~Lygeros, ``Efficient approximation
  of channel capacities,'' \emph{IEEE Transactions on Information Theory},
  vol.~61, no.~4, pp. 1649--1666, 2015.

\bibitem{Toyota:2020aa}
\BIBentryALTinterwordspacing
S.~Toyota, ``Geometry of arimoto algorithm,'' \emph{Information Geometry},
  vol.~3, no.~2, pp. 183--198, 2020. [Online]. Available:
  \url{https://doi.org/10.1007/s41884-020-00031-5}
\BIBentrySTDinterwordspacing

\bibitem{2641}
K.~Nakagawa and F.~Kanaya, ``A new geometric capacity characterization of a
  discrete memoryless channel,'' \emph{IEEE Transactions on Information
  Theory}, vol.~34, no.~2, pp. 318--321, 1988.

\bibitem{9476038}
K.~Nakagawa, Y.~Takei, S.-i. Hara, and K.~Watabe, ``Analysis of the convergence
  speed of the arimoto-blahut algorithm by the second-order recurrence
  formula,'' \emph{IEEE Transactions on Information Theory}, vol.~67, no.~10,
  pp. 6810--6831, 2021.

\bibitem{nakagawa2022proof}
\BIBentryALTinterwordspacing
K.~Nakagawa, Y.~Takei, and S.~ichiro Hara, ``On a proof of the convergence
  speed of a second-order recurrence formula in the arimoto-blahut algorithm,''
  2022. [Online]. Available: \url{https://arxiv.org/abs/2009.08780}
\BIBentrySTDinterwordspacing

\bibitem{e22050526}
\BIBentryALTinterwordspacing
G.~Aishwarya and M.~Madiman, ``Conditional r{\'e}nyi entropy and the
  relationships between r{\'e}nyi capacities,'' \emph{Entropy}, vol.~22, no.~5,
  2020. [Online]. Available: \url{https://www.mdpi.com/1099-4300/22/5/526}
\BIBentrySTDinterwordspacing

\bibitem{Nakiboglu:2019aa}
\BIBentryALTinterwordspacing
B.~Nakibo{\u g}lu, ``The augustin capacity and center,'' \emph{Problems of
  Information Transmission}, vol.~55, no.~4, pp. 299--342, 2019. [Online].
  Available: \url{https://doi.org/10.1134/S003294601904001X}
\BIBentrySTDinterwordspacing

\bibitem{e23020199}
\BIBentryALTinterwordspacing
S.~Verd{\'u}, ``Error exponents and $\alpha$-mutual information,''
  \emph{Entropy}, vol.~23, no.~2, 2021. [Online]. Available:
  \url{https://www.mdpi.com/1099-4300/23/2/199}
\BIBentrySTDinterwordspacing

\bibitem{1055640}
S.~{Arimoto}, ``Computation of random coding exponent functions,'' \emph{IEEE
  Transactions on Information Theory}, vol.~22, no.~6, pp. 665--671, 1976.

\bibitem{Gallager:1968:ITR:578869}
R.~G. Gallager, \emph{Information Theory and Reliable Communication}.\hskip 1em
  plus 0.5em minus 0.4em\relax New York, NY, USA: John Wiley \& Sons, Inc.,
  1968.

\bibitem{1055007}
S.~Arimoto, ``On the converse to the coding theorem for discrete memoryless
  channels (corresp.),'' \emph{IEEE Transactions on Information Theory},
  vol.~19, no.~3, pp. 357--359, 1973.

\bibitem{BN01990060en}
A.~Suguru, \emph{Information Theory}, ser. Kyoritsu Suugaku Kouza (in
  Japanese).\hskip 1em plus 0.5em minus 0.4em\relax KYORITSU SHUPPAN, 1976,
  no.~22.

\bibitem{10.5555/3019383}
S.-i. Amari, \emph{Information Geometry and Its Applications}, 1st~ed.\hskip
  1em plus 0.5em minus 0.4em\relax Springer Publishing Company, Incorporated,
  2016.

\bibitem{Holder1889}
O.~H{\"o}lder, ``Ueber einen mittelwerthabsatz,'' \emph{Nachrichten von der
  K{\"o}nigl. Gesellschaft der Wissenschaften und der
  Georg-Augusts-Universit{\"a}t zu G{\"o}ttingen}, vol. 1889, pp. 38--47, 1889.

\end{thebibliography}
\end{document}